\newcommand{\unit}{1\!\!1}
\newcommand\scalemath[2]{\scalebox{#1}{\mbox{\ensuremath{\displaystyle #2}}}}
\newcommand{\ketbra}[2]{|#2\rangle\langle#1|}
\newcommand{\be}{\begin{equation}}
\newcommand{\ee}{\end{equation}}
\newcommand{\ba}{\begin{eqnarray}}
\newcommand{\ea}{\end{eqnarray}}
\newcommand{\etal}{{\it{et. al. }}}
\newtheorem{observation}{Observation}
\newtheorem{definition}{Definition}
\newtheorem{result}{Result}
\newcommand*{\Scale}[2][4]{\scalebox{#1}{$#2$}}%
\begin{document}

\title{Contextuality, superlocality and nonclassicality of supernoncontextuality}
\author{Chellasamy Jebarathinam}
 \email{jebarathinam@gmail.com}

\affiliation{Department of Physics and Center for Quantum Information Science, National Cheng Kung University, Tainan 701, Taiwan (R.O.C.)}
 
 \affiliation{Physics Division, National Center for Theoretical Sciences, National Taiwan University, Taipei 106319, Taiwan (R.O.C.)}

\author{R. Srikanth}
\affiliation{Theoretical Sciences Division, Poornaprajna Institute of Scientific Research (PPISR), Bidalur post, Devanahalli, Bengaluru 562164, India}

\begin{abstract}
Contextuality is a fundamental manifestation of nonclassicality, indicating that for certain quantum correlations, sets of jointly measurable variables cannot be pre-assigned values independently of the measurement context.  
In this work, we characterize nonclassical quantum correlation beyond contextuality, in terms of supernoncontextuality, namely the higher-than-quantum hidden-variable(HV) dimensionality required to reproduce the given noncontextual quantum correlations. Thus supernoncontextuality is the contextuality analogue of superlocality.
Specifically, we study the quantum system of two-qubit states in a scenario composed of five contexts that demonstrate contextuality in a state-dependent fashion. For this purpose, we use the framework of boxes, whose behavior is described by a set of probabilities satisfying the no-disturbance conditions.
We first demonstrate that while superlocality is necessary to observe a contextual box, superlocality is not sufficient for contextuality.  On the other hand, a noncontextual superlocal box can be supernoncontextual, but superlocality is not a necessary condition. We then introduce a notion of nonclassicality beyond the standard contextuality, called semi-device-independent contextuality. We study
semi-device-independent contextuality of two-qubit states in the above mentioned scenario and demonstrate how supernoncontextuality implies this nonclassicality. To this end, we propose a criterion and a measure of semi-device-independent contextuality.
\end{abstract}

	\maketitle

\section{Introduction}
The observation of quantum systems admits certain correlations with features that defy classical intuition. Such correlations can come in two setups: Bell scenarios \cite{BCP+14} and Kochen-Specker scenarios \cite{BCG+22}. In a Bell scenario,  outcomes are observed locally on a composite system, whilst on the other hand, in a Kochen-Specker scenario, outcomes are observed in a non-contextual way on a single system that does not have subsystems or a composite system that need not admit a spatial separation between the subsystems. In such scenarios, an outstanding question introduced at the time \cite{Bel66} was whether predefined values to the outcomes could be assigned based on a hidden variable description. Remarkably,   it was shown by Bell that the correlations between the outcomes of quantum measurements in an appropriate setup can violate an inequality based on locality and realism \cite{Bel64}. This phenomenon is referred to as Bell non-locality of quantum correlations \cite{BCP+14}. On the other hand, Kochen and Specker proposed a setup of a certain set of quantum measurements on a quantum system, on which, they showed a logical contradiction with predefined value assignments originating from a hidden variable in a non-contextual way \cite{KS67}. Such yet another remarkable discovery is called Kochen-Specker contextuality of quantum correlations \cite{BCG+22}. Just like Bell inequalities, detecting  Kochen-Specker contextuality based on an inequality satisfied by any non-contextual hidden variable model was also proposed \cite{CFR+08,Cab08,KCB+08}.  

The quantum advantage of quantum correlations due to Bell nonlocality is shown against local-hidden-variable (LHV) models \cite{BHK05}, or, in other words, Bell non-locality of quantum correlations provides certification of 
relevant properties in a Bell scenario in a device-independent way, i.e., without describing the Hilbert-space dimension of the quantum system and the quantum measurements used to demonstrate the phenomenon \cite{AGM06, PAM+10}. This remarkable certification offered by Bell non-locality powers genuinely quantum information protocols such as 
device-independent quantum key distribution  \cite{ABG+07}
device-independent certification of randomness  \cite{PAM+10,AM16}, random access coding using Bell-nonlocal correlations \cite{PZ10,GHH+14,TMB+16} and device-independent certification of quantum devices (such as entangled states, incompatible measurements and quantum channels)  \cite{MY04,SBW+18,SBq20}. Quantum information protocols such as quantum computation, quantum cryptography and quantum communication for which quantum advantage is powered by
Kochen-Specker contextuality have also been explored \cite{AB09,CDV+11,BJB17,SBA17,SHP19}, see  also Sec. VI in Ref. \cite{BCG+22}. Based on contextuality, certification of quantum devices for quantum computation has been shown in a dimension-independent way \cite{BRV+19,BRVC19,IMO+20,SSA20,HXA+22,SJA22,JSS+23}.  A quantum computation device that has such certification in it provides the computational power to have genuine quantum advantage \cite{HWV+14}; otherwise, it is hard to believe that it is truly a  quantum computer \cite{HWA+14}.

However, for practical quantum information processing tasks, quantum advantages shown without Bell nonlocality or quantum contextuality has relevance.  Such quantum advantages are shown with other forms of certification of quantum devices such as certification of entanglement in a one-sided device-independent way \cite{UCN+20}  or fully semi-device-independent way \cite{GBS16}.   Quantum correlations in a Bell scenario have also been studied in a dimension-bounded way, which has relevance for semi-device-independent quantum information processing. In this context, super-locality refers to a
quantum advantage in simulating certain boxes that have a local-hidden-variable model in terms of the dimension of the quantum system against that of the hidden variable \cite{DW15,JAS17,JDS+18}.  In Refs. \cite{DBD+18,JDS23}, the authors studied in a specific scenario superunsteerability, which is an extension of super-locality to steering scenarios \cite{WJD07,UCN+20}. Such studies
provide a better understanding of quantum steering and quantum correlations due to quantum discord \cite{OZ01,HV01}.
Superlocality or superunsteerability also acts as a resource for 
quantum advantage in certain quantum information protocols   \cite{GBS16,JDK+19,JD23,JKC+24}. It would be interesting to extend the concept of superlocality to contextuality scenarios to identify useful new
applications and a useful new understanding of quantum contextuality.

In this work, we characterize these correlations beyond the contextuality of two-qubit states in a scenario that demonstrates contextuality in a state-dependent fashion which is considered by Peres \cite{Per90}. To this end, we adapt the box framework of contextuality to this scenario. We identify that quantum correlations based on superlocality can act as a resource for contextuality. While we observe that superlocality is necessary to demonstrate contextuality in the above mentioned scenario, superlocality of separable states can never lead to contextuality due to the convexity of noncontextual boxes. 
We next introduce the concept of \textit{supernoncontextuality} as the extension of superlocality to contextuality scenarios. Supernoncontextuality
indicates a quantum advantage in using a quantum system of lower Hilbert space dimension to simulate the box over the requirement of higher dimensionality of the noncontextual hidden variable model required to simulate it. 
We then introduce a notion of nonclassicality called \textit{semi-device-independent contextuality} to capture supernoncontextuality that also implies nonclassicality. In the specific scenario that we have considered, 
supernoncontextuality does not necessarily imply semi-device-independent contextuality. To capture and characterize semi-device-independent contextuality
of two-qubit states, we propose a criterion and quantification of semi-device-independent contextuality
of two-qubit states.
 
\section{Preliminaries}

\subsection{Quantum correlation beyond Bell nonlocality}\label{BS}
In a Bell scenario where two spatially separated observers Alice has access to inputs $x$ and observes outputs $a$ and Bob has access to inputs $y$ and observes outputs $b$,  a bipartite box $P$ = $P(ab|xy)$ := $\{ p(ab|xy) \}_{a,x,b,y}$ is the set of joint probability distributions $p(ab|xy)$ for all possible $a$, $x$, $b$, $y$. Such a bipartite box that can be observed in a Bell scenario satisfies the no-signaling (NS) conditions. The single-partite   
box $P(a|x)$ := $\{p(a|x)\}_{a,x}$ of a NS box $P(ab|xy)$ 
is the set of marginal probability distributions $p(a|x)$ for all possible $a$ and $x$; which are given by,
\be
p(a|x)=\sum_b p(ab|xy), \quad \forall a,x,y.
\ee
The single-partite box $P(b|y)$ := $\{p(b|y)\}_{b,y}$ of a NS box $P(ab|xy)$ is the set of marginal
probability distributions 
$p(b|y)$ for all possible $b$ and $y$; which are given by,
\be
p(b|y)=\sum_a p(ab|xy), \quad \forall x,b,y.
\ee

Bell's notion of locality for the NS boxes is defined as follows. A NS box  $P(ab|xy)$ is  Bell-local \cite{Bel64} iff it can be reproduced by an LHV model, 
\begin{equation}\label{LBfs}
p(ab|xy)=\sum_\lambda p(\lambda) p(a|x,\lambda)p(b|y,\lambda) \hspace{0.3cm} \forall a,b,x,y;
\end{equation}
where $\lambda$ denotes shared randomness which occurs with probability $p(\lambda)$; each $p(a|x,\lambda)$ and $p(b|y,\lambda)$ are conditional probabilities. Otherwise, it is Bell nonlocal.
The set of Bell-local boxes which have an LHV model forms a convex polytope called a local 
polytope. Any local box can be written as a convex mixture of the extremal boxes of the local polytope,
\begin{equation}\label{LBfsD}
p(ab|xy)=\sum_i p_i p^{(i)}_D(ab|xy) \hspace{0.3cm} \forall a,b,x,y;
\end{equation}
where $P^{(i)}_D(ab|xy)$ := $\{p^{(i)}_D(ab|xy)\}_{a,x,b,y}$ are local deterministic boxes.

In the context of describing a Bell-local box as in Eqs. (\ref{LBfs}) and (\ref{LBfsD})
using a finite amount of shared randomness, one can identify a quantum advantage with 
superlocality \cite{DW15,JAS17}.
In a bipartite Bell scenario, superlocality  is defined as follows:
\begin{definition}
Suppose we have a quantum state in $\mathbb{C}^{d_A}\otimes\mathbb{C}^{d_B}$
and measurements which produce a local bipartite box $P(ab|xy)$ := $\{ p(ab|xy) \}_{a,x,b,y}$.
Then, superlocality holds iff there is no decomposition of the box in the form,
\begin{equation}
p(ab|xy)=\sum^{d_\lambda-1}_{\lambda=0} p(\lambda) p(a|x, \lambda) p(b|y, \lambda) \hspace{0.3cm} \forall a,x,b,y,
\end{equation}
with the dimension of the shared randomness/hidden variable $d_\lambda\le d^{\rm l}_Q$, with   $d^{\rm l}_Q=\min(d^A_Q, d^B_Q)$ being the minimum local Hilbert space dimension between Alice's and Bob's Hilbert space dimensions. $d^A_Q$ and $d^B_Q$, respectively.  Here $\sum_{\lambda} p(\lambda) = 1$, $p(a|x, \lambda)$ and $p(b|y, \lambda)$ denotes arbitrary probability distributions arising from LHV $\lambda$ ($\lambda$ occurs with probability $p(\lambda)$). 
\end{definition}
To demonstrate superlocality using a bipartite state, a  nonzero two-way discord is necessary. 
Quantum discord captures quantum correlations beyond entanglement \cite{OZ01, HV01}. 
Let $D^{\rightarrow}(\rho_{AB})$ denote quantum discord  as from Alice to Bob, on the other hand, quantum discord  from Bob to Alice 
is denoted as $D^{\leftarrow}(\rho_{AB})$.  Quantum discord from Alice to Bob, $D^{\rightarrow}(\rho_{AB})$,  vanishes for a given $\rho_{AB}$ if and only if (iff) it is a classical-quantum (CQ) state of the form,
\begin{align}
\rho_{\texttt{CQ}}=\sum_i p_i \ket{i} \bra{i}_A \otimes \rho^{(i)}_B,
\end{align}
where  $\{\ket{i}\}$ forms an orthonormal basis on Alice's Hilbert space and $\rho^{(i)}_B$ are any quantum states on Bob's Hilbert space. On the other hand, $D^{\leftarrow}(\rho_{AB})$  vanishes for a given $\rho_{AB}$  iff it is a quantum-classical (QC) state of the form,
\begin{align}
\rho_{\texttt{QC}}=\sum_j p_j  \rho^{(j)}_A  \otimes \ket{j} \bra{j}_B,
\end{align}
where now $\{\ket{j}\}$ forms an orthonormal basis on Bob's Hilbert space and $\rho^{(i)}_A$ are any quantum states on Alice's Hilbert space. A CQ state can have 
one-way discord as measured through $D^{\leftarrow}(\rho_{AB})$, on the other hand, 
a QC state can have one-way discord as measured through $D^{\rightarrow}(\rho_{AB})$.
A quantum-quantum (QQ) state which has a nonzero discord both ways is neither a CQ state nor a QC state.
A classically-correlated (CC) state which has two-way zero discord has  the form,
\begin{align}
\rho_{\texttt{CC}}=\sum_{i,j} p_{ij} \ket{i} \bra{i}_A \otimes \ket{j} \bra{j}_B.
\end{align}
Quantum correlation beyond Bell nonlocality based on superlocality has been studied for 
Bell local states that have a nonzero two-way discord \cite{JAS17}. More specifically, in Ref. \cite{JKC+24}, it was shown that global coherence of two-way discordant states is required to demonstrate superlocality \cite{JKC+24}.

In Ref. \cite{JAS17}, an example of superlocality has been demonstrated in the context of the Bell scenario with two-input and two-output for each side. In this example,  superlocality of the noisy CHSH local box given by 
\begin{equation}
P(ab|xy)       =       \frac{2+(-1)^{a\oplus       b\oplus
    xy}\sqrt{2} V}{8}, \label{chshfam}
\end{equation}
with $ 0<  V \le 1/\sqrt{2}$, was explored.
Such local correlations can be produced by a two-qubit pure entangled state or a two-qubit Werner state having entanglement or a nonzero two-way discord for appropriate local non-commuting measurements. On the other hand, it cannot be reproduced by an LHV model with $d_\lambda=2$ as shown in Ref. \cite{JAS17}. 

In a given scenario, one may discern a strength of superlocality in terms of the number of bits of shared hidden-variable resources required to simulate the given superlocal correlation. Accordingly, the superlocal states in the context of the Bell scenario corresponding to the noisy CHSH box given above have been classified into two classes \cite{DBD+18}: (i) QQ states which demonstrate superlocality with local boxes having minimum hidden variable dimension $3$, and (ii) QQ states which demonstrate superlocality with local boxes having minimum hidden variable dimension $4$.

\subsection{Quantum contextuality}

A contextuality scenario \cite{BCG+22} is described by a set of contexts $C_i$. Each context $C_i$ has a set of observables which are jointly measurable. Such jointly measurable observables are 
mutually commuting, i.e., if the observables $A_1$, $A_2$ and $A_3$ belong to a context,
then they satisfy $[A_1,A_2]=[A_1,A_3]=0$. A contextuality scenario may be described by a compatibility graph which depicts the contexts of the given scenario as in Fig. $1$ of Ref. \cite{XC19}. For the contextuality scenarios that use up to six ideal measurements,
the relationship between the incompatibility of measurements in different contexts and contextuality has been studied in Ref. \cite{XC19}. 

Quantum contextuality has been recently formalized in terms of boxes to characterize better the phenomenon and its applications \cite{GH^314,ACM+18,HJM+23}. Such boxes are called no-disturbance boxes, analogous to no-signaling boxes in Bell scenarios.
Let $P=P(c_i|C_i)$ := $\{ p(c_i|C_i) \}_{c_i,C_i}$ denote a box arising in a contextuality scenario. Here $p(c_i|C_i)$ are the probabilities of jointly observing the outcome set $c_i$  conditioned on measuring the observables in a context $C_i$. 
In a contextuality scenario, the boxes satisfy the no-disturbance (ND) conditions \cite{AQB+13}, analogous to the NS conditions in Bell scenarios. Consider the joint measurements of  $A$ and $B$, and $A$ and $C$ with the joint outcomes denoted by $ab$ and $ac$, respectively. Then the ND condition for these two contexts read as follows: 
\begin{align}
\sum_{b}p(ab|AB)&=\sum_{c}p(ac|AC)=p(a|A),
\end{align}
analogous to the characterization of NS conditions.

Noncontextuality of the ND boxes is defined as follows.
A ND box is noncontextual  iff it has a non-contextual hidden variable model as follows:
\begin{equation}
p(c_i|C_i)=\sum_\lambda p(\lambda) p(c_i|C_i, \lambda)  \hspace{0.3cm} \forall c_i,C_i.
\end{equation}
Here $\sum_{\lambda} p(\lambda) = 1$, $P(c_i|C_i,\lambda)$ := $\{p(c_i|C_i,\lambda)\}_{c_i,C_i}$  denotes an arbitrary noncontextual box for the given $\lambda$ ($\lambda$ occurs with probability $p(\lambda)$). Otherwise, it is contextual.
The set of ND boxes which are noncontextual forms a convex polytope. A non-contextual box can be written as a convex mixture of the deterministic boxes \cite{ACM+18}, 
\be
p(c_i|C_i)=\sum_{j} p_j p^{(j)}_D(c_i|C_i), \quad \forall c_i, C_i,
\ee
where $P^{(j)}_D(c_i|C_i)$ := $\{p^{(j)}_D(c_i|C_i)\}_{c_i,C_i}$ are non-contextual deterministic boxes.

Kochen-Specker contextuality of different set-ups is often investigated through a logical paradox 
or a noncontextuality inequality \cite{BCG+22}. A simple logical paradox with a two-qubit system was addressed by Peres \cite{Per90},  which corresponds to the measurement scenario whose compatibility graph is shown in Fig. \ref{Fig:compatibility1} \cite{CFR+08,XC19}.
Let us choose the observables of the contexts of this scenario as follows:
\begin{align}\label{ObsP}
\begin{split}
&A_0=Z_2 \otimes \unit_2, \quad B_0= \unit_2 \otimes Z_2,  \\
&B_1= \unit_2 \otimes X_2, \quad A_1=X_2 \otimes \unit_2, \\
&D=Z_2 \otimes X_2, \quad E=  X_2  \otimes  Z_2.
\end{split}
\end{align}
Here $\unit_2$, $X_2$, $Y_2$ and $Z_2$ denote the identity operator acting on qubit Hilbert space and the Pauli operators given by
$X_2=\ket{0}\bra{1}+\ket{1}\bra{0}$, $Y_2=-i \ket{0}\bra{1}+i\ket{1}\bra{0}$ and 
$Z_2=\ketbra{0}{0}-\ketbra{1}{1}$, respectively.
Measurements of the observables in Eq. (\ref{ObsP}) 
in the contexts $A_0B_0$, $A_1B_1$ and $DE$ on the two-qubit maximally entangled state given by
\be
\ket{\psi^{\texttt{me}}_2}=\frac{1}{\sqrt{2}}(\ket{00}+\ket{11}),
\ee
implies that the state and observables satisfy the following conditions:
\begin{align}\label{Perl}
\begin{split}
A_0B_0\ket{\psi^{\texttt{me}}_2}&=\ket{\psi^{\texttt{me}}_2}, \\
A_1B_1\ket{\psi^{\texttt{me}}_2}&=\ket{\psi^{\texttt{me}}_2}, \\
DE\ket{\psi^{\texttt{me}}_2}&=-\ket{\psi^{\texttt{me}}_2}.
\end{split}
\end{align}
Note that where in the above equation, we can also write $D$ and $E$ as a product of the two other observables of the scenario as  $D=A_0B_1$ and $E=A_1B_0$ since  $A_0B_1$ and $A_1B_0$ can be jointly measured 
for the above-mentioned observables. Using this, we can write Eq. (\ref{Perl}) in terms of $A_0$, $A_1$, $B_0$ and $B_1$ as follows:
\begin{align}\label{Perl1}
\begin{split}
A_0B_0\ket{\psi^{\texttt{me}}_2}&=\ket{\psi^{\texttt{me}}_2}, \\
A_1B_1\ket{\psi^{\texttt{me}}_2}&=\ket{\psi^{\texttt{me}}_2}, \\
(A_0B_1)(A_1B_0)\ket{\psi^{\texttt{me}}_2}&=-\ket{\psi^{\texttt{me}}_2}.
\end{split}
\end{align}
If we try to assign a non-contextual value $v(\cdot)$ to the outcomes of $A_0$, $A_1$, $B_0$ and $B_1$
satisfying the above conditions, we have
\begin{align}\label{Perv}
\begin{split}
v(A_0)v(B_0)&=1, \\
v(A_0)v(B_1)&=1, \\
v(A_0B_1)v(A_1B_0)&=v(A_0)v(B_1)v(A_1)v(B_0)=-1.
\end{split}
\end{align}
Multiplying both the sides of the three lines in the above equation, we have
$$1=-1,$$ 
where we have used $v^2(A_x)=v^2(B_y)=1$, with $x,y=0,1$. This is a contradiction
since $1 \ne -1$. The above paradox was argued to imply Kochen-Specker contextuality \cite{Per90,Mer90}. 

\begin{figure}[t!]
\begin{center}
\includegraphics[width=5.5cm]{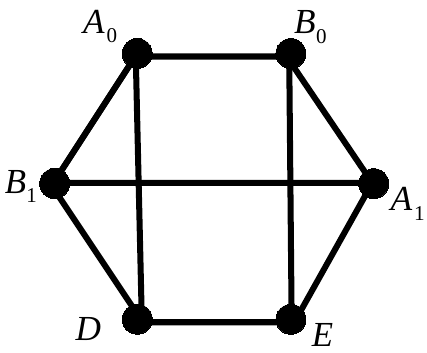}
\end{center}
\caption{Compatibility graph associated with the contextuality scenario of the inequality given by Eq. (\ref{dincI4}). 
\label{Fig:compatibility1}}
\end{figure}

For a box $P(c_i|C_i)=\{p(c_i|C_i)\}$ arising in the scenario in Fig. \ref{Fig:compatibility1}, the probability distributions are denoted as $p(a_0b_0|A_0B_0)$ and $p(a_1b_1|A_1B_1)$
 in case of the contexts $C_0$ and $C_3$, respectively, $p(de|DE)$ in case of the context $C_4$
and $p(a_0b_1d|A_0B_1D)$ and $p(a_1b_0e|A_1B_0E)$ in case of the contexts $C_1$
and $C_2$, respectively. Here $a_l,b_m \in \{0,1\}$, $\forall l, m \in \{0,1\}$ and $d,e\in \{0,1\}$. $D$ and $E$ are a joint measurement acting jointly on two subspaces. For instance, in the case of the set of observables given by Eq. (\ref{ObsP}), the observables  $D$ and $E$ can be realized using a nonlocal gate followed by a Pauli measurement as in Ref. \cite{KZG+09} or implementing joint projectors corresponding to these observables as in Ref. \cite{ARB+09}.

For the scenario whose compatibility graph is Fig. \ref{Fig:compatibility1}, to demonstrate quantum contextuality in an experimental setup, Cabello \etal \cite{CFR+08} considered the following noncontextuality inequality:
\begin{align}\label{dincI4}
\braket{A_0B_0} &+ \braket{A_0B_1D}+\braket{A_1B_0E}+\braket{A_1B_1}  \nonumber \\ 
&- \braket{D E} \le 3,
\end{align}
where each term in the left-hand-side (LHS) is an expectation of the value of the joint measurement of the observables in each context such as $\braket{A_0B_0}=\sum_{a_0,b_0}(-1)^{a_0 \oplus b_0}p(a_0b_0|C_0)$.
Consider the contextual box   $P_P(c_i|C_i)$  arising from the state $\ket{\psi^{\texttt{me}}_2}$ for observables given by Eq. (\ref{ObsP}). This box
has the following distributions:
\be \label{cb2q}
    P_{\rm P}(c_i|C_i) = 
\begin{dcases}
    \frac{1}{2},& \text{if } a_0 + b_0=0\quad \texttt{for} \quad 
    C_0\\
     \frac{1}{4},& \text{if } a_0 + b_1 + d =0\quad \texttt{for} \quad  C_1 \\
      \frac{1}{4},& \text{if } a_1 + b_0 + e =0\quad \texttt{for} \quad  C_2 \\
    \frac{1}{2},& \text{if } a_1 + b_1=0\quad \texttt{for} \quad  C_3 \\
      \frac{1}{2},& \text{if } d + e =1 \quad \texttt{for}  \quad C_4 \\
     0,              &  \texttt{otherwise}. \\ 
\end{dcases}
\ee
We call this box  Peres box.
The maximal violation $5$ of the inequality (\ref{dincI4}) is achieved by the Peres box ($P_{\rm P}(c_i|C_i)$ ).
The Peres box (\ref{cb2q}), which is an extremal contextual box, is given by (according to the matrix notation provided in  \ref{matrixNota}): \\
\begin{align}
P_{\rm P}(c_i|C_i)&=
\begin{pmatrix}
\frac{1}{2} & 0 & 0 & \frac{1}{2} \\ \\[0.05cm]
\frac{1}{4} & 0 & 0 & \frac{1}{4} & 0 & \frac{1}{4} & \frac{1}{4} & 0\\ \\[0.05cm]
\frac{1}{4} & 0 & 0 & \frac{1}{4} & 0 & \frac{1}{4} & \frac{1}{4} & 0\\ \\[0.05cm]
\frac{1}{2} & 0 & 0 & \frac{1}{2} \\ \\[0.05cm]
0 & \frac{1}{2} & \frac{1}{2} & 0  \\
\end{pmatrix}
\end{align}

Based on contextual boxes that imply logical paradoxes, a notion called ``strong contextuality" was formalized, see Sec. IV. A. 4 in Ref. \cite{BCG+22}. Such a notion is characterized as having a measure of 
contextuality called the contextual fraction, $0 \le 1-\alpha \le 1$,  to take its maximal value $1$ \cite{ABM17}. 
The contextual fraction is obtained from the noncontextual fraction or noncontextual content \cite{AB11, ADL+12}, $\alpha$, 
analogous to the quantity ``local fraction" in Bell nonlocality \cite{EPR92} in the context of no-signaling boxes \cite{BCA+11}.  

We consider the measure of contextuality called ``contextuality cost" \cite{GH^314}, related to the above measure. It is defined as 
\begin{align}
{\rm C}(P)&:=\texttt{inf}\{p \in [0,1]: P= pP_C +(1-p)P_{NC} \},
\end{align}
where infimum is the overall possible decomposition of the box into a mixture of some noncontextual box and some contextual box which may be quantum or non-quantum.
The Popescu-Rohrlich box \cite{PR94} is an example of a nonquantum box \cite{AQB+13} that has ${\rm C}(P)=1$ implying maximal contextuality. On the other hand, the Peres box (\ref{cb2q}) also has ${\rm C}(P)=1$  while it is quantum. 

\section{Contextuality in the Peres' scenario}
We analyse in which range a one-parameter family of correlations arising in the scenario of Fig. \ref{Fig:compatibility1} is contextual. This family of correlations is the noisy Peres box $P_{\rm nP}(c_i|C_i)$, given by
\begin{align}\label{wnP}
P_{\rm nP}(c_i|C_i)&=W P_{\rm P}(c_i|C_i)+ (1-W) P_{\rm N}(c_i|C_i), 
\end{align}
with $0 \le W \le 1$, where $P_{\rm P}(c_i|C_i)$ is the Peres box given by Eq. (\ref{cb2q}) and $P_{\rm N}(c_i|C_i)$ is a noncontextual box given by
\be \label{wn}
    P_{\rm N}(c_i|C_i) = 
\begin{dcases}
    \frac{1}{4},&  \forall c_i \quad \texttt{for}  \quad C_0,C_3,C_4\\
    \frac{1}{4},& a_0+b_1+e=0  \quad  \texttt{for} \quad C_1 \\
    \frac{1}{4},&  a_1+b_0+e=0  \quad  \texttt{for} \quad C_2 \\
    0,              &  \texttt{otherwise}. \\ 
\end{dcases}
\ee
The noisy Peres box can be produced quantum-mechanically 
by  two-qubit Werner state given by
\begin{equation}\label{WerFam}
\rho_W=W \ketbra{\psi^{\texttt{me}}_2}{\psi^{\texttt{me}}_2}+ (1-W) \frac{\unit_4}{4},
\end{equation}
for the observables given by Eq. (\ref{ObsP}). The one-parameter family of states is entangled iff $W>1/3$ \cite{Wer89} and has a nonzero quantum discord for any $W>0$ \cite{OZ01,HV01}.

The noisy Peres box violates the noncontextuality inequality 
(\ref{dincI4}) for $W>1/3$, implying contextuality in the range 
$ 1/3 < W \le 1$. On the other hand, for 
$0 \le  W \le 1/3 $, it has a noncontextual hidden variable model. 
This has been checked by finding out that it can be written as a convex mixture of the deterministic boxes in  \ref{DBPscen} in this range.
For instance, consider the noisy Peres box (\ref{wnP}) for $W=1/3$ (adapting the notation in  \ref{matrixNota}) given by
\begin{align} \label{nPbncE}
P^{W=1/3}_{\rm nP}(c_i|C_i)
&:=
\begin{pmatrix}
\frac{1}{3} & \frac{1}{6} & \frac{1}{6} & \frac{1}{3} \\ \\[0.05cm]
\frac{1}{4} & 0 & 0 & \frac{1}{4} & 0 & \frac{1}{4} & \frac{1}{4} & 0 \\ \\[0.05cm]
\frac{1}{4} & 0 & 0 & \frac{1}{4} & 0 & \frac{1}{4} & \frac{1}{4} & 0 \\ \\[0.05cm]
\frac{1}{3} & \frac{1}{6} & \frac{1}{6} & \frac{1}{3}  \\\\[0.05cm]
\frac{1}{6} & \frac{1}{3} & \frac{1}{3} & \frac{1}{6} \\
\end{pmatrix}.
\end{align}
It has been checked that the above box can be written as a convex mixture of the deterministic boxes in  \ref{DBPscen} (see  \ref{CPnPbncE} for the proof). On the other hand, for $W=0$,
the box can be written as the uniform mixture of four noncontextual deterministic boxes as  follows:
\begin{align}
P_{\rm N}&=\frac{1}{4}\Big(P^{(0000)(00)}_D+P^{(0110)(10)}_D+P^{(0111)(01)}_D  \\ \nonumber
&+P^{(0001)(11)}_D\Big), 
\end{align}
where the deterministic boxes are given in  \ref{DBPscen}.
Therefore, for 
$0 \le  W \le 1/3 $, the box is noncontextual.
From the above considerations, we can now state the following observation:
\begin{observation}
    The noisy Peres box $P_{\rm nP}(c_i|C_i)$ given by Eq. (\ref{wnP}) is contextual for  $\frac{1}{3} < W \le 1$ and has the contextuality cost ${\rm C}(P)=\min\{0,3W-1\}$.
\end{observation}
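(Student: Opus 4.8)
The plan is to split the statement into the contextual range, the noncontextual range, and then matching lower and upper bounds on the cost; along the way the computation pins down the cost as $\max\{0,\tfrac{3W-1}{2}\}$, so I treat this as the target value. Since the left-hand side $I(\cdot)$ of the noncontextuality inequality (\ref{dincI4}) is linear in the box, I would first write $I(P_{\rm nP})=W\,I(P_{\rm P})+(1-W)\,I(P_{\rm N})$. Evaluating the five correlators on $P_{\rm P}$ gives $\braket{A_0B_0}=\braket{A_0B_1D}=\braket{A_1B_0E}=\braket{A_1B_1}=1$ and $\braket{DE}=-1$, hence $I(P_{\rm P})=5$; on $P_{\rm N}$ the $C_0,C_3,C_4$ distributions are uniform (each such correlator vanishes) while $\braket{A_0B_1D}=\braket{A_1B_0E}=1$, hence $I(P_{\rm N})=2$. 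Therefore $I(P_{\rm nP})=3W+2$, which exceeds the noncontextual bound $3$ exactly when $W>1/3$, certifying contextuality on $(1/3,1]$. For $0\le W\le 1/3$ I would note that $P_{\rm nP}^{W}=3W\,P_{\rm nP}^{W=1/3}+(1-3W)\,P_{\rm N}$ with both $P_{\rm nP}^{W=1/3}$ and $P_{\rm N}$ already shown noncontextual in the text, so $P_{\rm nP}^{W}$ is noncontextual and ${\rm C}(P)=0$ there.

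For the lower bound on ${\rm C}(P)$ when $W>1/3$ I would run the usual linear-functional argument. The functional $I(\cdot)$ is a sum of five expectation values each confined to $[-1,1]$, so $I(P)\le 5$ for \emph{every} box $P$ (quantum or post-quantum), while $I(P_{NC})\le 3$ for every noncontextual box. Hence, for any admissible decomposition $P_{\rm nP}=p\,P_C+(1-p)\,P_{NC}$, linearity forces $3W+2=I(P_{\rm nP})\le 5p+3(1-p)=2p+3$, i.e. $p\ge\tfrac{3W-1}{2}$; taking the infimum over decompositions gives ${\rm C}(P)\ge\tfrac{3W-1}{2}$.

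For the matching upper bound I would exhibit the explicit decomposition
\begin{equation}
P_{\rm nP}^{W}=\frac{3W-1}{2}\,P_{\rm P}+\frac{3(1-W)}{2}\,P_{\rm nP}^{W=1/3},
\end{equation}
whose two weights are nonnegative and sum to $1$ for $1/3\le W\le 1$. Substituting $P_{\rm nP}^{W=1/3}=\tfrac13 P_{\rm P}+\tfrac23 P_{\rm N}$ and $P_{\rm nP}^{W}=W P_{\rm P}+(1-W)P_{\rm N}$ and comparing the coefficients of $P_{\rm P}$ and of $P_{\rm N}$ verifies the identity. Since $P_{\rm P}$ is contextual and $P_{\rm nP}^{W=1/3}$ is noncontextual (established earlier in the text), this decomposition is admissible, so ${\rm C}(P)\le\tfrac{3W-1}{2}$; together with the lower bound this gives ${\rm C}(P)=\tfrac{3W-1}{2}$ on $(1/3,1]$, consistent with ${\rm C}(P_{\rm P})=1$ recorded earlier for $W=1$.

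The only step requiring any insight is the choice of the noncontextual box in the optimal decomposition: it must be the \emph{threshold} box $P_{\rm nP}^{W=1/3}$, which lies on the noncontextual boundary (it saturates $I=3$) and is collinear with $P_{\rm P}$ and $P_{\rm nP}^{W}$; the naive splitting $P_{\rm nP}^{W}=W P_{\rm P}+(1-W)P_{\rm N}$ only yields the non-tight bound ${\rm C}(P)\le W$. Everything else — the correlator evaluations and the verification of the convex identity — is mechanical. (The expression as printed, $\min\{0,3W-1\}$, appears to be a typo for $\max\{0,\tfrac{3W-1}{2}\}$, which is what the argument above establishes.)
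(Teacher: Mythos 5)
Your handling of the two ranges is essentially the paper's own argument: contextuality for $W>1/3$ follows from evaluating the functional of Eq.~(\ref{dincI4}) linearly, $I(P_{\rm nP})=3W+2>3$, and noncontextuality for $0\le W\le 1/3$ follows from convexity; your interpolation $P_{\rm nP}^{W}=3W\,P_{\rm nP}^{W=1/3}+(1-3W)\,P_{\rm N}$ is in fact a cleaner way to cover the whole interval than the text, which only exhibits decompositions at the endpoints $W=1/3$ and $W=0$.

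On the cost you depart from the printed statement, and your value is the correct one under the paper's own definition. The paper gives no derivation of ${\rm C}(P)$, whereas your two-sided argument is sound: the algebraic maximum of the functional over no-disturbance boxes is $5$ (attained by $P_{\rm P}$) and the noncontextual bound is $3$, so any admissible decomposition $P_{\rm nP}=p\,P_C+(1-p)\,P_{NC}$ forces $3W+2\le 2p+3$, i.e. $p\ge\tfrac{3W-1}{2}$, and the collinear decomposition $P_{\rm nP}^{W}=\tfrac{3W-1}{2}P_{\rm P}+\tfrac{3(1-W)}{2}P_{\rm nP}^{W=1/3}$ attains this bound because the threshold box is noncontextual. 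That last fact, which you take from the text, is indeed true, although the weights printed in Eq.~(\ref{DBDnP}) do not actually reproduce the $DE$ row of Eq.~(\ref{nPbncE}) (they yield $(1/3,1/6,1/6,1/3)$ instead of $(1/6,1/3,1/3,1/6)$); a correct noncontextual model does exist, e.g. one supported on twelve of the sixteen admissible deterministic boxes with weight $1/12$ each, so your reliance on it is justified. Hence ${\rm C}(P_{\rm nP})=\max\{0,\tfrac{3W-1}{2}\}$, which is also the only value consistent with the paper's earlier assertion that the Peres box itself has ${\rm C}=1$: the evidently intended reading $\max\{0,3W-1\}$ of the misprinted $\min\{0,3W-1\}$ would give ${\rm C}=2$ at $W=1$, and appears to conflate the raw violation $I(P)-3$ with the cost, omitting the normalization by the gap $5-3=2$. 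So there is no gap in your argument; the discrepancy lies in the stated formula, which your proof corrects.
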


The set of nonsuperlocal boxes forms a \textit{nonconvex} subset of set of all boxes in the given Bell scenario \cite{DW15}. Whereas the set of noncontextual boxes forms a \textit{convex} subset of set of all boxes in the given contextuality scenario \cite{GHH+14,ABM17}. It then follows the the set of noncontextual boxes in the Peres' scenario convexifies the set of    nonsuperlocal boxes in the scenario  with the four contexts 
$\{A_0B_0, A_0B_1, A_1B_0, A_1B_1\}$. From this, we can conclude the following.
\begin{observation}
  Entanglement of superlocal two-qubit states is necessary to demonstrate contextuality  in the Peres' scenario.  
\end{observation}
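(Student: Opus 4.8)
The plan is to connect three facts already established in the excerpt: (i) superlocality requires at least a nonzero two-way discord (stated after Definition~1 in the Bell-scenario subsection); (ii) in the four-context marginal problem $\{A_0B_0,A_0B_1,A_1B_0,A_1B_1\}$ underlying the Peres' scenario, superlocality is \emph{necessary} to demonstrate contextuality (this is the content of the preceding discussion, where it is asserted that the set of noncontextual boxes convexifies the nonsuperlocal boxes); and (iii) the set of noncontextual boxes is convex, whereas the set of nonsuperlocal boxes is not. The goal is to upgrade ``superlocal'' to ``entangled superlocal'', i.e. to rule out separable states as a source of contextuality even when they are superlocal.

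First I would recall that any separable two-qubit state can be written as a convex mixture $\rho_{\mathrm{sep}}=\sum_k q_k\,\sigma_A^{(k)}\otimes\sigma_B^{(k)}$ of product states. For each product term, the box produced in the Peres' scenario factorizes across the $A$- and $B$-subsystems on the four contexts $\{A_0B_0,A_0B_1,A_1B_0,A_1B_1\}$; a product state yields deterministic-like single-system statistics in the sense that the joint distribution on each such context is of the product form $p(a|A_x)\,p(b|B_y)$, which is manifestly noncontextual (it admits a value assignment drawn from the local distributions, or more carefully, it lies in the noncontextual polytope since it is a product of single-system boxes and each of those is a mixture of deterministic responses). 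Hence each term $P^{(k)}$ in the decomposition induced by $\rho_{\mathrm{sep}}$ is a noncontextual box.

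Next I would invoke convexity of the noncontextual set: since the box $P_{\rho_{\mathrm{sep}}}=\sum_k q_k P^{(k)}$ is a convex combination of noncontextual boxes $P^{(k)}$, it is itself noncontextual, no matter what the marginal single-system statistics look like and regardless of whether $\rho_{\mathrm{sep}}$ has nonzero discord (hence possibly is superlocal). Therefore no separable state can produce a contextual box in the Peres' scenario. Combining this with fact (ii) — that demonstrating contextuality in this scenario forces the underlying box to be superlocal on the four-context subproblem — we conclude that contextuality requires a state that is \emph{both} superlocal \emph{and} entangled, which is the claimed statement.

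The main obstacle, and the step deserving the most care, is the claim that each product-state term $P^{(k)}$ is genuinely noncontextual as a box in the \emph{full} five-context Peres' scenario, not merely on the four-context subproblem. One must check that the extended probabilities on the contexts $C_1=A_0B_1D$, $C_2=A_1B_0E$ and $C_4=DE$ — where $D$ and $E$ are realized by nonlocal operations — still admit a noncontextual hidden-variable model for a product input state. The cleanest route is to observe that for a product state the measurement of $D=A_0B_1$ and $E=A_1B_0$ reduces to independent local measurements, so the deterministic value assignment $v(A_x)v(B_y)$ is consistent on every context simultaneously (the Peres contradiction used the $-1$ on $DE$, which arises only for the entangled state); hence the product box sits in the noncontextual deterministic polytope, and the convexity argument goes through verbatim. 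I would also remark that this is exactly the mechanism by which separable-state superlocality — which exists, since separable states can have nonzero two-way discord — fails to be ``lifted'' to contextuality, in contrast to the Bell-scenario case where nonsuperlocality is not preserved under mixing.
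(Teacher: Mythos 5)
Your overall architecture is the same as the paper's: separable states are mixtures of product states, the noncontextual polytope is convex, and superlocality was already argued to be necessary for contextuality, so the whole weight rests on the lemma that a product state yields a noncontextual box in the full five-context scenario. The paper itself disposes of this in one line (a reductio on convexity), whereas you try to actually exhibit a noncontextual model for the product-state box --- and it is exactly this added step, the one you yourself flag as critical, that fails as written. Your mechanism is a party-factorized model: the box is claimed to be ``a product of single-system boxes'', each a mixture of deterministic responses, with $v(D)=v(A_0)v(B_1)$ and $v(E)=v(A_1)v(B_0)$. Since the box assigns zero probability to $d\neq a_0\oplus b_1$ and $e\neq a_1\oplus b_0$, any noncontextual model must indeed respect these relations on its support, so it must reproduce $\braket{DE}$ as $E[v(A_0)v(A_1)]\,E[v(B_0)v(B_1)]$ once Alice's and Bob's assignments are independent. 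But $E[v(A_0)v(A_1)]$ is constrained by the single-observable marginals: for $\pm1$ variables with means $r_z,r_x$ one has $|E[v(A_0)v(A_1)]|\le 1-|r_z-r_x|$. Meanwhile the quantum box from a product state has $\braket{DE}=\braket{Y\otimes Y}=r_y s_y$, since $DE=Y\otimes Y$; the joint $(D,E)$ measurement is in an entangled eigenbasis and does \emph{not} reduce to independent local measurements. Take $\ket{\chi}\otimes\ket{\chi}$ with $\ket{\chi}$ the $+1$ eigenstate of $(Y_2+Z_2)/\sqrt{2}$, i.e.\ $r_x=s_x=0$, $r_y=r_z=s_y=s_z=1/\sqrt{2}$: the box requires $\braket{DE}=1/2$, while any party-factorized model gives $|\braket{DE}|\le(1-1/\sqrt{2})^2\approx 0.086$. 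So the route ``value assignments drawn from the local distributions'' cannot reproduce all product-state boxes.

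The lemma itself can still be salvaged, because a noncontextual hidden-variable model is only required to be a global joint distribution over $(a_0,a_1,b_0,b_1)$ reproducing the context marginals --- it carries no locality or independence constraint across the two qubits. One can keep the four cross pairs independent (as the product state demands within each context) while correlating the conditional correlation of $(a_1,b_1)$ with the values of $(a_0,b_0)$, which generates the needed four-point correlation $E[a_0a_1b_0b_1]=r_ys_y$ without disturbing any pair marginal; such a correlated construction handles the counterexample above. So the gap is not in the statement but in your justification: replace the ``product of single-system boxes'' step either by an explicit non-factorized joint distribution of this kind, or retreat to the paper's shorter argument, which never attempts to exhibit a model and argues purely from convexity of the noncontextual set.
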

\begin{proof}
 If there exist  separable two-qubit states which can demonstrate contextuality in the Peres' scenario, then it will imply that the set of noncontextual boxes is not convex. Therefore, all separable states cannot be used to demonstrate contextuality in the Peres' scenario.
 \end{proof}
Whether every two-qubit entangled state is contextual in the Peres' scenario remains to be studied.

From the above observations, we have now obtained the following result.
\begin{result}
Superlocality of two-qubit states is necessary  for producing a contextual box in the state-dependent scenario of Peres. At the same time, superlocality is not sufficient for contextuality since any separable two-qubit state always imply a noncontextual box.  
\end{result}

\section{Nonclassicality of supernoncontextuality}
Any quantum advantage with contextuality is shown against all possible 
noncontextual hidden variable models. Therefore, such quantum advantage is independent of dimension of the quantum systems used to observe contextuality. 
Here we introduce semi-device-independent contextuality as an extension of semi-device-independent nonlocality in Bell scenarios \cite{JKC+24} to contextuality scenarios. The motivation to introduce this notion beyond the standard contextuality is to characterize the nonclassicality of noncontextual boxes that have superlocality. Such noncontextual boxes have supernoncontextuality.

As in the Bell scenarios, in the context of observing a non-contextual box using a finite amount of noncontextual hidden variable \cite{HRA07}, a quantum advantage can be identified in terms of 
supernoncontextuality.
In  a contextuality scenario, supernoncontextuality is defined as follows:
\begin{definition}
Suppose we have a quantum state in $\mathbb{C}^{d^{\rm g}}_{\rm Q}$, with global Hilbert space dimension $d^{\rm g}_{\rm Q}$,
and a set of contexts $\{C_i\}$ which produce a non-contextual box $P(c_i|C_i)$ := $\{ p(c_i|C_i) \}_{c_i,C_i}$.
Then, supernoncontextuality holds iff there is no decomposition of the box in the form,
\begin{equation}
p(c_i|C_i)=\sum^{d_\lambda-1}_{\lambda=0} p(\lambda) p(c_i|C_i, \lambda)  \hspace{0.3cm} \forall c,C_i,
\end{equation}
with dimension of the hidden variable $d_\lambda\le$ $d^{\rm g}_{\rm Q}$.  Here $\sum_{\lambda} p(\lambda) = 1$, $P(c_i|C_i,\lambda)$ := $\{p(c_i|C_i,\lambda)\}_{c_i,C_i}$  denotes an arbitrary non-contextual boxes for the given $\lambda$ ($\lambda$ occurs with probability $p(\lambda)$).
\end{definition}

In the following, we present examples of supernoncontextuality in the Peres' scenario.


\subsection{Supernoncontextuality in the Peres' scenario} 
We start by providing an example of supernoncontextuality 
with the noisy Peres box (\ref{wnP}) for $W=1/3$. 
For this box, from the  decomposition of the box in terms of the deterministic noncontextual boxes as given by Eq. (\ref{DBDnP}), we have obtained the following noncontextual hidden variable model:
\begin{align}
P^{W=1/3}_{\rm nP}(c_i|C_i)&=\sum^{d_\lambda-1}_{\lambda=0} p(\lambda) P(c_i|C_i, \lambda),
\end{align}
with the dimension of the hidden variable $d_\lambda =16$.  Here $\sum_{\lambda} p(\lambda) = 1$, $P(c|C_i,\lambda)$ = $\{p(c|C_i,\lambda)\}_{c,C_i}$  denotes a  non-contextual box for the given $\lambda$ as in Eq. (\ref{DBDnP}), respectively ($\lambda$ occurs with probability $p(\lambda)=1/8$ for $\lambda=0,1,14,15$ and $p(\lambda)=1/24$, otherwise).  It has been checked that 
the simulation of the box using the noncontextual deterministic boxes requires 
the above-mentioned $16$ deterministic boxes. This implies that the box has the minimal $d_\lambda=16$. Therefore, we have obtained the following observation.
\begin{observation}
    The noisy Peres box (\ref{wnP}) for $W=1/3$ is supernoncontextual.  In other words, the box requires a hidden variable of dimension $d_\lambda = 16$ to be simulated using a noncontextual hidden variable model, 
on the other hand, it can be simulated quantum mechanically using a quantum system with global Hilbert space dimension $d^g_Q=4$.
\end{observation}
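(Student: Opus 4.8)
The plan is to verify the two halves of the Observation separately. The quantum half is immediate from the construction already in hand: $P^{W=1/3}_{\rm nP}$ is produced by the two-qubit Werner state $\rho_{W}$ of Eq.~(\ref{WerFam}) at $W=1/3$ together with the six observables of Eq.~(\ref{ObsP}), all acting on $\mathbb{C}^{2}\otimes\mathbb{C}^{2}$, so the global quantum dimension is $d^{\rm g}_{\rm Q}=4$. For the classical half I must show that the smallest hidden-variable dimension of a noncontextual model is $d_\lambda=16$. The inequality $d_\lambda\le 16$ is witnessed by the explicit convex decomposition of Eq.~(\ref{DBDnP}) into $16$ deterministic noncontextual boxes with the weights stated there; it then remains to prove the matching lower bound, after which $d_\lambda=16>4=d^{\rm g}_{\rm Q}$ is precisely the condition for supernoncontextuality. (For supernoncontextuality alone it would suffice to rule out every model with $d_\lambda\le 4$, which is weaker; the sharp value is the stronger target.)

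For the lower bound I would first cut down the set of deterministic noncontextual boxes that can appear with nonzero weight. Since in any convex decomposition all weights and all terms are nonnegative, a deterministic box used with positive weight must vanish on every outcome on which $P^{W=1/3}_{\rm nP}$ vanishes; the target box is zero on context $C_{1}$ precisely at the outcomes with $a_{0}\oplus b_{1}\oplus d=1$ and on $C_{2}$ precisely at those with $a_{1}\oplus b_{0}\oplus e=1$. This forces $v(D)=v(A_{0})\oplus v(B_{1})$ and $v(E)=v(A_{1})\oplus v(B_{0})$ for every admissible deterministic box, leaving exactly $16$ of them, parametrized by $(v(A_{0}),v(A_{1}),v(B_{0}),v(B_{1}))\in\{0,1\}^{4}$. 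I would then show that all $16$ are indispensable. Writing the unknown weights $q_{\mu}\ge 0$ with $\mu\in\{0,1\}^{4}$, the requirement $P^{W=1/3}_{\rm nP}=\sum_{\mu}q_{\mu}P^{(\mu)}_{D}$ reduces to a short linear system on the $q_{\mu}$: normalization, vanishing of all single-observable averages, $\langle A_{0}B_{0}\rangle=\langle A_{1}B_{1}\rangle=\tfrac{1}{3}$, $\langle A_{0}B_{1}\rangle=\langle A_{1}B_{0}\rangle=0$, and $\langle DE\rangle=-\tfrac{1}{3}$. For each index $\mu_{0}$ I would solve the linear program ``minimize $q_{\mu_{0}}$ subject to this system and $q\ge 0$'' and check that its optimum is strictly positive; if so, every feasible weight vector has full support, no $15$ or fewer admissible boxes reproduce the target, and $d_\lambda=16$. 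To keep these $16$ programs small I would exploit the symmetry group of $P^{W=1/3}_{\rm nP}$ --- the relabelings that exchange the ``$A$'' and ``$B$'' observables, swap the two indices, and flip the outcome pairs $\{A_{0},B_{0}\}$ and $\{A_{1},B_{1}\}$ --- which permutes both the $16$ admissible boxes and the feasible polytope, collapsing the cases to a few orbit representatives.

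The step I expect to be the real obstacle is the last one: certifying that \emph{every} solution of the linear system with $q\ge 0$ has all $16$ coordinates strictly positive, rather than merely exhibiting one full-support solution. This is a statement about the entire feasible polytope, not about a single decomposition; the model of Eq.~(\ref{DBDnP}) only delivers $d_\lambda\le 16$, and excluding all lower-dimensional models is exactly what forces the linear-programming analysis above (or, equivalently, a direct argument that the affine subspace cut out by the equality constraints meets the nonnegative orthant only in its relative interior). Everything preceding it --- the quantum realization, the explicit upper-bound decomposition, and the reduction of the zero structure to $16$ candidates --- is routine bookkeeping.
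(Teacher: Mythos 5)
Your reduction of the admissible support to the sixteen deterministic boxes obeying $v(D)=v(A_0)\oplus v(B_1)$ and $v(E)=v(A_1)\oplus v(B_0)$ is correct and is exactly the paper's own first step; the quantum half and the upper bound are likewise the paper's. But the step you yourself single out as the obstacle---certifying via linear programs that \emph{every} feasible weight vector has full support on all sixteen boxes---cannot be carried out, because the claim it targets is false. For every admissible box $d\oplus e=(a_0\oplus b_0)\oplus(a_1\oplus b_1)$, while the box $P^{W=1/3}_{\rm nP}$ fixes $P(a_0\neq b_0)=P(a_1\neq b_1)=\tfrac{1}{3}$ (rows $C_0$, $C_3$) and $P(d\neq e)=\tfrac{2}{3}$ (row $C_4$); since $P(d\neq e)=P(a_0\neq b_0)+P(a_1\neq b_1)-2P(a_0\neq b_0,\,a_1\neq b_1)$, every feasible decomposition must give \emph{zero} weight to the four boxes with $a_0\neq b_0$ and $a_1\neq b_1$, namely $P^{1011(00)}_D$, $P^{1110(00)}_D$, $P^{0001(11)}_D$, $P^{0100(11)}_D$. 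Hence at most twelve boxes can ever appear, your LPs would return optimum zero, and in fact far fewer boxes suffice: one can check directly that
\begin{align*}
P^{W=1/3}_{\rm nP}&=\tfrac{1}{6}\Big(P^{0000(00)}_D+P^{1111(11)}_D+P^{1001(10)}_D+P^{0110(01)}_D\Big)\\
&\quad+\tfrac{1}{12}\Big(P^{0010(10)}_D+P^{1000(01)}_D+P^{1101(01)}_D+P^{0111(10)}_D\Big)
\end{align*}
reproduces all five rows, so the minimal hidden-variable dimension is at most $8$, not $16$.

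The same parity argument shows that the ingredient you borrow for the upper bound, Eq.~(\ref{DBDnP}), is not actually a decomposition of the box: it assigns weight $\tfrac{1}{24}$ to the four forbidden boxes and consequently yields the $DE$ row $(\tfrac{1}{3},\tfrac{1}{6},\tfrac{1}{6},\tfrac{1}{3})$ instead of the required $(\tfrac{1}{6},\tfrac{1}{3},\tfrac{1}{3},\tfrac{1}{6})$, although rows $C_0$--$C_3$ do come out right. So the paper's own proof---that decomposition plus the unsupported assertion that all sixteen boxes are needed---is defective, and your plan, if executed, would refute rather than confirm the stated value of $d_\lambda$. What survives is precisely the weaker statement you mention parenthetically: no model with $d_\lambda\le 4$ exists, because a four-box model must place one box in each nonzero cell of row $C_0$, forcing the weights to be $\{\tfrac{1}{3},\tfrac{1}{3},\tfrac{1}{6},\tfrac{1}{6}\}$, and no disjoint grouping of these weights sums to the four entries $\tfrac{1}{4}$ of row $C_1$; hence $d_\lambda\ge 5>d^{\rm g}_{\rm Q}=4$ and the box is indeed supernoncontextual. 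The correct course is to prove that weaker bound directly and to replace the headline value $d_\lambda=16$ (both yours and the paper's) by the true minimum, which is at most $8$.
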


We next provide  example of supernoncontextuality with the states that can be used to demonstrate contextucality. Consider the maximally entangled state $\ket{\psi^{\texttt{me}}_2}$  that can produce the Peres box. To exhibit supernoncontextuality with this state, we consider the following choice of observables:
\begin{align}\label{Obssn}
\begin{split}
&A_0=Z_2 \otimes \unit_2, \quad B_0= \unit_2 \otimes Z_2,  \\
&B_1= \unit_2 \otimes X_2, \quad A_1=X_2 \otimes \unit_2, \\
&D= \unit_2 \otimes X_2 , \quad E=  X_2  \otimes  \unit_2.
\end{split}
\end{align}
In the above choice of observables, we do not invoke a nonlocal observable for $D$ and $E$
as used for the demonstration of contextuality. This also creates the above choice of observables to have less incompatibility than in the case of the observables (\ref{ObsP}) as there is no incompatibility between $D$ and $A_1$ and $E$ and $B_1$ in Eq. (\ref{Obssn}).

For the observables given above, the box produced from the two-qubit state $\ket{\psi^{\texttt{me}}_2}$ is noncontextual and is  given by
\begin{align}\label{SncMe}
P_{\rm snc}(c_i|C_i)&=
\begin{pmatrix}
\frac{1}{2} & 0 & 0 & \frac{1}{2} \\ \\[0.05cm]
\frac{1}{8} & \frac{1}{8} & \frac{1}{8} & \frac{1}{8} & \frac{1}{8} & \frac{1}{8} & \frac{1}{8} & \frac{1}{8}\\ \\[0.05cm]
\frac{1}{8} & \frac{1}{8} & \frac{1}{8} & \frac{1}{8} & \frac{1}{8} & \frac{1}{8} & \frac{1}{8} & \frac{1}{8}\\ \\[0.05cm]
\frac{1}{2} & 0 & 0 & \frac{1}{2} \\ \\[0.05cm]
\frac{1}{2} & 0 & 0 & \frac{1}{2}  \\
\end{pmatrix}
\end{align}
The  box given above has the following noncontextual deterministic model:
\begin{align}\label{sncsl}
&P_{\rm snc} \nonumber \\
&=\frac{1}{8}\Big(P^{(0000)(00)}_D+P^{(0101)(00)}_D+P^{(1010)(00)}_D  \\ \nonumber
&+P^{(1111)(00)}_D +P^{(0000)(11)}_D+P^{(0101)(11)}_D  \\ \nonumber
&+P^{(1010)(11)}_D+P^{(1111)(11)}_D\Big). 
\end{align}
The simulation of the box using the deterministic boxes requires the $8$ deterministic boxes as appearing in the above decomposition, which implies the following noncontextual model:
\begin{align}
P(c_i|C_i)&=\sum^{d_\lambda-1}_{\lambda=0} p(\lambda) P(c_i|C_i, \lambda),
\end{align}
with the dimension of the hidden variable $d_\lambda =8$. Here $P(c_i|C_i,\lambda)=\{p(c_i|C_i,\lambda)\}_{c_i,C_i}$ are the deterministic boxes appearing in Eq. (\ref{sncsl}), respectively. From the above noncontexual model, it  follows that the box is supernoncontextual since $d_\lambda=8>d_Q$ and this value of $d_\lambda$ is the minimal overall possible noncontextual hidden variable models. 
We have obtained the following observation.
\begin{observation}
    Supernoncontextuality of certain noncontextual and superlocal states as in the case of Werner state family occurs due to the contextual resource of measurements which has full incompatibility between all contexts. 
    On the other hand, supernoncontextuality of certain contextual states  in the Peres' scenario occurs due to superlocality and not requiring full incompatibility between measurements in all contexts as in the case of demonstration of contextuality.  
\end{observation}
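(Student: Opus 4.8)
\emph{Proof strategy.} The plan is to establish the two halves of the observation by treating the two measurement configurations (\ref{ObsP}) and (\ref{Obssn}) in turn, in each case identifying which structural feature forces the minimal noncontextual hidden-variable dimension to exceed the global quantum dimension $d^{\rm g}_{\rm Q}=4$. I take as given what was shown above: with the observables (\ref{ObsP}) the noisy Peres box (\ref{wnP}) at $W=1/3$ has minimal $d_\lambda=16$, with the observables (\ref{Obssn}) the box $P_{\rm snc}$ of (\ref{SncMe}) has minimal $d_\lambda=8$, and both are reproduced by a two-qubit state, so $d^{\rm g}_{\rm Q}=4$ in both.

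For the first half I would begin by recording that the configuration (\ref{ObsP}) exhibits \emph{full incompatibility between all contexts}: a direct commutator check --- using $D=Z_2\otimes X_2$, $E=X_2\otimes Z_2$ --- shows that each of the ten pairs among $C_0,\dots,C_4$ contains two noncommuting observables (the six pairs not involving $C_4$ via $[A_0,A_1]\neq0$ or $[B_0,B_1]\neq0$, and the four pairs involving $C_4=DE$ via $[A_0,E]\neq0$, $[A_1,D]\neq0$, $[B_1,E]\neq0$). I would then argue that it is precisely this incompatibility pattern that forces $d_\lambda=16$: the superlocality of the Werner state, taken by itself, constrains only the restriction of the box to the four Bell-like contexts $\{A_0B_0,A_0B_1,A_1B_0,A_1B_1\}$, which form an ordinary CHSH compatibility cycle; there the reduced box (correlators $(1/3,0,0,1/3)$ at $W=1/3$) is superlocal --- it cannot be reproduced with $d_\lambda=2$, since every $2$-value model with uniform marginals yields $\pm1$ correlators --- but needs only a small $d_\lambda$ (of order $3$ or $4$, as for superlocal noisy-CHSH correlations), far below $16$. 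Hence the jump to $d_\lambda=16$ is produced by the extra structure: the observables $D,E$ adjoined in $C_1,C_2$ and, above all, the fifth context $C_4=DE$ being incompatible with every other context, i.e.\ by the genuinely nonlocal character of $D$ and $E$. The quantitative contrast I would use to pin this down is the second configuration, obtained by removing exactly the incompatibilities $[D,A_1]$ and $[E,B_1]$, which already drops the requirement to $d_\lambda=8$.

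For the second half I would first check that (\ref{Obssn}) is \emph{not} fully incompatible: with $D=\unit_2\otimes X_2$ and $E=X_2\otimes\unit_2$ one has $[D,A_1]=[E,B_1]=0$ (and $[D,B_1]=[E,A_1]=0$), so $C_4=DE$ is compatible with both $C_2$ and $C_3$ and the only incompatibilities left are those already among $\{A_0,A_1,B_0,B_1\}$, exactly as in a CHSH-type Bell scenario. The key step is then to read off the structure of the decomposition (\ref{sncsl}): each of its eight deterministic boxes is fixed by a joint assignment to $(A_0,A_1)$, with $B_0=A_0$ and $B_1=A_1$ forced by the perfect correlations $\langle A_0B_0\rangle=\langle A_1B_1\rangle=1$ of $P_{\rm snc}$, together with a pair $(d,e)\in\{00,11\}$ forced by $\langle DE\rangle=1$; thus $d_\lambda=8=4\times2$, where the factor $4$ is exactly the number of hidden-variable values needed to reproduce the superlocal box carried by $\{A_0B_0,A_0B_1,A_1B_0,A_1B_1\}$ --- a box that genuinely needs more than $d^{\rm l}_{\rm Q}=2$ because the maximally entangled state produces it. Since this product decomposition invokes no incompatibility linking $C_4$ to the remaining contexts, the supernoncontextuality of $P_{\rm snc}$ is inherited entirely from the superlocality of the state rather than from measurement incompatibility.

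The step I expect to be the main obstacle is turning ``occurs due to $X$ and not $Y$'' into something sharper than a heuristic. For the first half one really wants that any configuration with strictly reduced inter-context incompatibility, applied to an equally superlocal state, admits $d_\lambda<16$ --- the $d_\lambda=8$ of the second configuration is one witness, but the general statement needs more. For the second half one wants that replacing $\ket{\psi^{\texttt{me}}_2}$ by a less superlocal state (for instance a separable two-qubit state) while keeping (\ref{Obssn}) strictly lowers $d_\lambda$ below $8$; proving this in generality is the hard part, since it calls for a lower bound on $d_\lambda$ phrased in terms of a ``superlocality rank'' of the underlying state. Underlying all of this, the minimality claims $d_\lambda=16$ and $d_\lambda=8$ themselves rest on a vertex-enumeration / linear-programming search over the noncontextual polytope of the scenario of Fig.~\ref{Fig:compatibility1}, which is the computational ingredient of the argument.
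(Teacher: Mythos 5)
Your proposal follows essentially the same route as the paper: the observation there is supported simply by juxtaposing the two worked examples --- the Werner state at $W=1/3$ with the fully incompatible observables of Eq.~(\ref{ObsP}) giving minimal $d_\lambda=16$, and the maximally entangled state with the reduced-incompatibility observables of Eq.~(\ref{Obssn}) giving a noncontextual box with minimal $d_\lambda=8>d^{\rm g}_{\rm Q}=4$ --- together with the remark that the second configuration removes the $[D,A_1]$ and $[E,B_1]$ incompatibilities. Your extra structural reading of the decomposition (\ref{sncsl}) as a $4\times 2$ product (the superlocal marginal on the Bell-type contexts times the perfectly correlated $DE$ pair) actually substantiates the minimality the paper only asserts as ``checked,'' and your candid caveat that the attribution ``due to superlocality rather than full incompatibility'' remains heuristic applies equally to the paper's own example-based treatment.
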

In fact, supernoncontextuality also occurs for nonsuperlocal states as we have demonstrated in  \ref{Exsncwsl}. We have obtained the following observation.
\begin{observation}
    While superlocality implies supernoncontextuality  in the Peres' scenario,
    superlocality is not necessary for supernoncontextuality in this scenario.
\end{observation}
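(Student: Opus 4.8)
The statement has two independent halves: first, that superlocality of a two-qubit state forces the corresponding box in the Peres' scenario to be supernoncontextual, and second, that the converse fails. The plan is to treat them separately, since the first is a structural implication built on the preceding subsection and the second is an explicit counterexample.

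For the first half I would start from a two-qubit state $\rho$ that produces a superlocal box $P$ in the four-context Bell sub-scenario $\{A_0B_0,A_0B_1,A_1B_0,A_1B_1\}$; since $d^{\rm l}_Q=2$ for two qubits, superlocality means the minimal LHV decomposition of $P$ already uses $d_\lambda\ge 3$. I would then extend $\rho$ to the full Peres' scenario by adjoining observables $D$ (compatible with $A_0,B_1$) and $E$ (compatible with $A_1,B_0$), chosen as in Eq.~(\ref{Obssn}) so that the resulting box $\tilde P$ stays noncontextual while carrying the correlations of the three-outcome contexts $C_1,C_2$ and of $C_4=DE$. The easy step is to note, via the no-disturbance conditions, that marginalizing $D$ out of $C_1$ and $E$ out of $C_2$ returns $P$; since a deterministic noncontextual box of the Peres' scenario restricts (reading $A_l$ as Alice's, $B_m$ as Bob's) to a local deterministic box, every noncontextual hidden-variable model of $\tilde P$ restricts to an LHV model of $P$, giving $d_\lambda(\tilde P)\ge d^{\min}_\lambda(P)\ge 3$. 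The substantive step is to upgrade $\ge 3$ to $>4=d^{\rm g}_Q$: here I would argue that each hidden-variable value must also fix $v(D),v(E)$, and that the joint constraints from $C_1,C_2$ together with the linking context $C_4$ preclude any decomposition into four or fewer deterministic noncontextual boxes of \ref{DBPscen}. This is exactly what the two worked examples of the preceding subsection verify: the Werner family at $W=1/3$ needs $d_\lambda=16$ (Eq.~(\ref{DBDnP})), and the maximally entangled state with the observables of Eq.~(\ref{Obssn}) needs $d_\lambda=8$ (Eq.~(\ref{sncsl})); I would present these as the proof, checking additionally that the four-context restriction of each is genuinely superlocal (a two-value LHV model would force $ab'=0$ while also demanding $ab=a'b'\neq 0$ for the correlators, a contradiction).

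For the second half I would exhibit a two-qubit state that is \emph{not} superlocal for $A_0,A_1,B_0,B_1$ — its four-context box admits an LHV model with $d_\lambda=2$ — yet whose Peres' extension still requires $d_\lambda>4$. This is the construction carried out in \ref{Exsncwsl}: one picks the state and the observables $D,E$ so that, although the four-context marginal factorizes through only two hidden-variable values, the correlations injected into $C_1,C_2,C_4$ cannot be reproduced by any noncontextual hidden-variable model using at most four deterministic boxes. The argument is then closed by certifying the matching lower bound $d_\lambda\ge 5$ and recalling that $d^{\rm g}_Q=4$ for the quantum realization, so the box is supernoncontextual while the state is nonsuperlocal.

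The main obstacle in both halves is the lower bound on the hidden-variable dimension. The generic restriction argument only yields $d_\lambda\ge 3$, which sits \emph{below} the supernoncontextuality threshold $d^{\rm g}_Q=4$, so pushing past $4$ has to exploit the specific combinatorics of the Peres' compatibility graph (Fig.~\ref{Fig:compatibility1}) — the two three-element contexts and the linking context $C_4$ — rather than any soft argument. Concretely this amounts to showing infeasibility of the linear program asking for a convex decomposition of the given box into $\le 4$ of the deterministic noncontextual boxes of \ref{DBPscen}, and I expect this certification (done by explicit enumeration in the two supernoncontextual examples and in the example of \ref{Exsncwsl}) to carry the real weight of the proof.
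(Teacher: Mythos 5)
Your proposal takes essentially the same route as the paper: both clauses are settled by explicit examples whose minimal noncontextual-hidden-variable dimensions are certified by enumeration over the deterministic boxes --- the superlocal Werner box at $W=1/3$ ($d_\lambda=16$) and the maximally entangled state with the modified observables ($d_\lambda=8$) for the first clause, and the nonsuperlocal rank-$2$ and classically-correlated states of the appendix ($d_\lambda=7$ and $6$, both exceeding $d^{\rm g}_{\rm Q}=4$) for the second. Your additional marginalization argument (any noncontextual model restricts to an LHV model of the four-context Bell marginal, giving $d_\lambda\ge 3$) is sound but, as you yourself note, falls short of the threshold, so the real weight rests on the same explicit certifications the paper invokes.
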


\subsection{Semi-device-independent contextuality}
The demonstration of standard contextuality in any scenario implies the presence of incompatibility of measurements between all different contexts. This happens in a dimension-independent way as the standard contextuality is shown against all possible noncontextual hidden variable models. We now introduce the notion of semi-device-independent contextuality as follows.
\begin{definition}
    Suppose we have a quantum state in $\mathbb{C}^{d^{\rm g}_{\rm Q}}$
and a set of measurements of the contexts $\{C_i\}$, which produce a  box $P(c_i|C_i)$.
Then, the box implies the presence of  semi-device-independent contextuality  iff there is no decomposition of the box in the form,
\begin{equation}
p(c_i|C_i)=\sum^{d_\lambda-1}_{\lambda=0} p(\lambda) p(c_i|C_i, \lambda)  \hspace{0.3cm} \forall c_i,C_i,
\end{equation}
with the dimension of the hidden variable $d_\lambda\le$ $d^{\rm g}_{\rm Q}$, provided that the box detects incompatibility of measurements between  all different contexts. 
\end{definition}

We now study semi-device-independent contextuality of supernoncontextual boxes in the Peres' scenario.
We have already seen that there are noncontextual boxes in the Peres' scenario whose supernoncontextuality does not require the full incompatibility of measurements between different contexts or does not even require superlocality. Therefore, these supernoncontextual boxes do not exhibit semi-device-independent contextuality.
To witness semi-device-independent contextuality of supernoncontextual boxes,
we propose the following criterion. In Ref. \cite{JD23}, a nonlinear witness 
of superlocality was introduced. This witness was defined in terms of conditional probabilities, for our purpose, we adopt this witness
in terms of the covariance,  $cov(A_x,B_y)$, of the observables $A_x$ and $B_y$ given by
 \be
 cov(A_x,B_y)=\braket{A_xB_y} -\braket{A_x}\braket{B_y},
 \ee
 where $\braket{A_xB_y}$ and $\braket{A_x}$, and $\braket{B_y}$
 are joint and marginal expectation values, respectively.
The witness of superlocality in terms of the covariances is then given by
\begin{align}
	Q&=\left|\begin{array}{cc}cov(A_0,B_0) & cov(A_1,B_0)\\ 
		cov(A_0,B_1) & cov(A_1,B_1) \end{array}\right|. \label{QC}
\end{align}
We now state the following result.
\begin{result} \label{secres}
Suppose we have a quantum state in $\mathbb{C}^{4}$
and a set of measurements of the contexts $\{C_i\}$ in the Peres' scenario, which produce a  box $P(c_i|C_i)$.
Then, the box implies the presence of  semi-device-independent contextuality 
if the witness of superlocality $Q$ in Eq. (\ref{QC}) is nonzero and $\braket{A_0B_1D}=\braket{A_1B_0E}=1$, with  $cov(D,E)>0$. 
\end{result}
In the above conditions, witnessing superlocality implies supernoncontextuality as well as incompatibility between measurements in the two context $A_0B_0$ and $A_1B_1$, and, the other condition $\braket{A_0B_1D}=\braket{A_1B_0E}=1$, with  $cov(D,E)>0$ has been invoked to imply incompatibility between the measurements in the other contexts so that  semi-device-independent contextuality 
is witnessed.

We proceed to apply the above criterion of semi-device-independent contextuality to some supernoncontextual boxes. Note that the supernoncontextual boxes given by Eqs. (\ref{SncMe}), (\ref{sncbd}) and  (\ref{cc2qhd}) do not satisfy the criterion for semi-device-independent contextuality in Result. \ref{secres}. 
For the noisy Peres box $P_{\rm nP}(c_i|C_i)$ (\ref{wnP}) which is noncontextual in the range $0 \le W  \le \frac{1}{3}$, the witness of superlocality $Q$ in Eq. (\ref{QC}) takes the value $Q=W^2$ which is nonzero for any $W>0$ and it also satisfies
the other condition in Result. \ref{secres} for any $W>0$. Therefore, the noisy Peres box $P_{\rm nP}(c_i|C_i)$ has semi-device-independent contextuality for any $W>0$.

The noisy Peres box (\ref{wnP}) for $W=1/3$ has superlocality with $d_\lambda=4$. This can be observed by noting that the marginal of this box 
$\{p(a_xb_y|A_xB_y)\}$ has superlocality with $d_\lambda=4$ \cite{DBD+18}. In fact,
the separable Werner state $\rho_W$ with $W=\frac{1}{3}$ that gives rise to the supernoncontextual box can be written as a convex mixture of product states with cardinality/rank $4$ as follows. 
 To express the separable Werner state $\rho_W$ with $W=\frac{1}{3}$ as a convex mixture of minimal product states, we define the state $\ket{\theta,\phi}$ as 
    \begin{equation} \label{eq:desc_sep_1}
    \ket{\theta,\phi} := \cos\left(\frac{\theta}{2}\right) \ket{0}
                        + \exp(i \phi)\sin\left(\frac{\theta}{2}\right) \ket{1} \,,
    \end{equation}
which is an arbitrary pure state in $\mathbb{C}^2$
 and the set $\{\mathcal{Z}\}$ as ${\mathcal{Z}=\{\ket{0,0},\ket{\theta^*,0},
\ket{\theta^*,\frac{2\pi}{3}},\ket{\theta^*,\frac{4\pi}{3}}\}}$,
with ${\theta^*=\arccos(-\frac{1}{3})}$. Then the above mentioned separable Werner state can be decomposed as follows \cite{BPP15}:
    \begin{equation} \label{eq:desc_sep_2}
     \rho^{W=1/3}_{W} = \frac{1}{4} \sum_{k=1}^4{Z_k\otimes Z_k} \,,
    \end{equation}
with ${Z_k=\ket{z_k}\bra{z_k}}$ and ${\ket{z_k}\in\mathcal{Z}}$.
Thus, the state leads to superlocality with $d_\lambda=4$.

 We next consider a separable two-qubit state that has superlocality with $d_\lambda=3$. To this end,  we
 define the set ${\mathcal{W}=\{\ket{0,0},\ket{\frac{2\pi}{3},0},\ket{\frac{2\pi}{3},\pi}\}}$ with the state given by Eq. (\ref{eq:desc_sep_1}).
Then consider the separable state that can be expressed as 
    \begin{equation} \label{eq:rk3_max_opt}
    \rho^{\texttt{rank}-3}_{\rm QQ} = \frac{1}{3}\sum_{k=1}^3{W_k\otimes W_k} \,,
    \end{equation}
with ${W_k=\ket{w_k}\bra{w_k}}$ and ${\ket{w_k}\in\mathcal{W}}$.
The above state is equivalent to the state 
\begin{align}\label{rank3sep}
\rho_{\rm Sl}^{d_\lambda=3} &= \frac{1}{4} \Big( |00\rangle \langle 00| + |++ \rangle \langle ++|  \nonumber \\ 
&+|11\rangle \langle 11| + |-- \rangle \langle --|
\Big),   
\end{align}
up to local unitary transformation \cite{BPP15}. The above state which belongs to the Bell-diagonal family is superlocal \cite{JDK+19} and has superlocality with $d_\lambda=3$ since the cardinality of the separable decomposition is $3$.
For the measurements given by Eq. (\ref{ObsP}), the box arising from the superlocal state (\ref{rank3sep}) satisfies the conditions in Result. \ref{secres} except $cov(D,E)>0$. Hence this supernoncontextual box is not semi-device-independent contextual.
 
We  are now going to consider a supernoncontextual box arising from another rank-$3$ separable state that has superlocality with $d_\lambda=3$, but has semi-device-independent contextuality. This state is given by
\begin{align}\label{rank3}
\sigma_{\rm Sl}^{d_\lambda=3}&=\frac{1}{3} \Big( \ketbra{00}{00} +\ketbra{++}{++}   \nonumber \\  
&+\ketbra{+'+'}{+'+'}  \Big).
\end{align}
Here $\ket{+'}$ is the $+$ eigenstate of the Pauli observables $Y_2$. 
Consider the following box arising from the rank-$3$ state (\ref{rank3}) for the observables given by Eq. (\ref{ObsP}):
\begin{align}\label{cbd}
P_{\rm snc}(c_i|C_i)&=
\begin{pmatrix}
\frac{1}{2} & \frac{1}{6} & \frac{1}{6} & \frac{1}{6} \\ \\[0.05cm]
\frac{5}{12} & 0 & 0 & \frac{1}{12} & 0 & \frac{1}{4} & \frac{1}{4} & 0 \\ \\[0.05cm]
\frac{5}{12} & 0 & 0 & \frac{1}{12} & 0 & \frac{1}{4} & \frac{1}{4} & 0  \\ \\[0.05cm]
\frac{1}{2} & \frac{1}{6} & \frac{1}{6} & \frac{1}{6}  \\\\[0.05cm]
\frac{1}{6} & \frac{1}{3} & \frac{1}{3} & \frac{1}{6} \\
\end{pmatrix}.
\end{align}
It has been checked that the above box satisfies all conditions in Result. \ref{secres}. Hence it is semi-device-independent contextual.

We  are now going to propose a measure  of semi-device-independent contextuality to characterize the nonclassicality quantitatively. We define this measure, called Peres strength, analogues to the quantification of Bell nonclassicality using Bell strength in Ref. \cite{JAS17}. 
\begin{definition}
    Consider the decomposition of any given box $P(c_i|C_i)$ in the Peres' scenario into a convex mixture of a maximally contextual box such as a Peres box $P_{\rm P}(c_i|C_i)$ as in Eq. (\ref{cb2q}) and a noncontextual box  $P_{\rm nc}(c_i|C_i)$,
    \begin{equation}
        P(c_i|C_i)=p P_P(c_i|C_i) +(1-p) P_{\rm nc}(c_i|C_i).
    \end{equation}
    Peres strength, $\rm PS(P)$, of the given box is then defined as the fraction of the Peres box $P_{\rm P}(c_i|C_i)$ maximized over all possible decomposition of the given box. 
\end{definition}
A nonzero Peres strength, $\rm PS(P)$, of the given supernoncontextual box  that satisfies the conditions in Result. \ref{secres} can be used to quantify semi-device-independent contextuality.

We now calculate Peres strength of two examples of semi-device-independent contextuality. For the noisy Peres box, the decomposition given by Eq. (\ref{wnP}) provides the optimal decomposition to provide  Peres strength as $\rm PS(P_{\rm nP})=W$. On the other hand, for the box (\ref{cbd}), Peres strength is obtained using the  following decomposition of the box:
\begin{equation}
P_{\rm snc}(c_i|C_i)=\frac{1}{3} P_P(c_i|C_i)+ \frac{2}{3} P_{\rm nc}(c_i|C_i),
\end{equation}
where 
\begin{align}\label{ncbd}
P_{\rm nc}(c_i|C_i)&=
\begin{pmatrix}
\frac{1}{2} & \frac{1}{4} & \frac{1}{4} & 0 \\ \\[0.05cm]
\frac{1}{2} & 0 & 0 & 0 & 0 & \frac{1}{4} & \frac{1}{4} & 0 \\ \\[0.05cm]
\frac{1}{2} & 0 & 0 & 0 & 0 & \frac{1}{4} & \frac{1}{4} & 0  \\ \\[0.05cm]
\frac{1}{2} & \frac{1}{4} & \frac{1}{4} & 0   \\\\[0.05cm]
\frac{1}{4} & \frac{1}{4} & \frac{1}{4} & \frac{1}{4} \\
\end{pmatrix}.
\end{align}
The above decomposition provides the Peres strength of the box as $\rm PS(P_{\rm snc})=\frac{1}{3}$.

 \section{Conclusions}
 In this work, we have been motivated to extend semi-device-independent nonlocality based on superlocality \cite{JKC+24} to contextuality scenarios.
 This is relevant to understand  contextuality differently and to identify the quantum resource for quantum computation without contextuality. To make progress in this context, we have considered the specific scenario that corresponds to the Peres' proof of contextuality \cite{Per90}, which has the five contexts as in Fig. \ref{Fig:compatibility1}. First, we have characterized contextuality in this scenario by adopting the framework of boxes in the contextuality scenarios. We have then found an interesting relationship between superlocality and contextuality. Specifically, we have demonstrated that superlocality 
 in the two-input and two-output Bell scenario is necessary to demonstrate contextuality, while it is not sufficient. 
 
 Superlocality of all separable two-qubit states always implies a noncontextual box in the specific scenario since the set of noncontextual boxes is convex.  We are then interested in the question of whether noncontextual boxes that have superlocality has nonclassicality, just like semi-device-independent nonlocality occurs for Bell-local states. 
 To this end, we have considered supernoncontextuality which is an extension of superlocality to contextuality scenarios. Supernoncontextuality
indicates a quantum advantage in using a quantum system of lower Hilbert space dimension to simulate the box over the requirement of high dimensionality of the hidden 
variable required to simulate it using a noncontextual hidden variable model. 
However such a quantum advantage, in the specific scenario that we have considered does not necessarily imply nonclassicality. We have defined this nonclassicality beyond the standard contextuality as semi-device-independent contexutality. Finally, we have studied how semi-device-independent contexutality occurs for supernoncontextual boxes in the specific scenario by introducing a criterion and quantification of the nonclassicality.

\section*{Acknowledgement}
C. J. would like to thank Dr Ashutosh Rai, Dr Jaskaran Singh, Dr Debarshi Das
and Dr Wei-Min Zhang for useful discussions
and the National Science
and Technology Council (formerly Ministry of Science and
Technology), Taiwan  (Grant No. MOST 111-2811-M-006-040-MY2).
RS acknowledges with thanks the partial support by the Indian Science \& Engineering Research Board (SERB grant CRG/2022/008345).
This work was supported by the National Science and Technology Council, the Ministry of Education (Higher Education Sprout Project NTU-113L104022-1), and the National Center for Theoretical Sciences of Taiwan.

\onecolumngrid
\appendix

\section{Noncontextual deterministic boxes of the scenario in Fig. \ref{Fig:compatibility1}}\label{DBPscen}
Let us denote the ND boxes $P(c_i|C_i)$ of the contextuality 
scenario with the compatibility graph as shown in Fig. \ref{Fig:compatibility1} in the matrix form as follows:
\[
\scalemath{0.76}{
\begin{aligned}\label{matrixNota}
&P(c_i|C_i)=   \\
&\begin{pmatrix}
p(00|A_0B_0) & p(01|A_0B_0) & p(10|A_0B_0) & p(11|A_0B_0) \\ \\[0.05cm]
p(000|A_0B_1D) & p(010|A_0B_1D) & p(100|A_0B_1D) & p(110|A_0B_1D) &p(001|A_0B_1D) & p(011|A_0B_1D) & p(101|A_0B_1D) & p(111|A_0B_1D)  \\ \\[0.05cm]
p(000|A_1B_0E) & p(010|A_1B_0E) & p(100|A_1B_0E) & p(110|A_1B_0E) &p(001|A_1B_0E) & p(011|A_1B_0E) & p(101|A_1B_0E) & p(111|A_1B_0E) \\ \\[0.05cm]
p(00|A_1B_1) & p(01|A_1B_1) & p(10|A_1B_1) & p(11|A_1B_1) \\ \\[0.05cm]
p(00|DE) & p(01|DE) & p(10|DE) & p(11|DE)  \\ 
\end{pmatrix}
\end{aligned}
}\]

The scenario has $64$ extremal noncontextual boxes which are deterministic boxes denoted as $P^{(\alpha\beta\gamma\epsilon)(de)}_D$, with $\alpha, \beta, \gamma, \epsilon \in \{0,1\}$.  $P^{\alpha\beta\gamma\epsilon}_D(a_xb_y|A_xB_y)$  denote 
the marginal deterministic boxes of $P^{(\alpha\beta\gamma\epsilon)(de)}_D$.  
$P^{\alpha\beta\gamma\epsilon}_D(a_xb_y|A_xB_y)$ are given by
\be \label{LDB}
    P_{D}^{\alpha \beta \gamma \epsilon} (a_xb_y|A_xB_y) = 
\begin{dcases}
    1,& \text{if } a_x = \alpha x \oplus \beta, b_y = \gamma y \oplus \epsilon \\
    0,              & \text{otherwise},
\end{dcases}
\ee
in matrix form, the above boxes are given by
\begin{align*}
 P_{D}^{0000}=
\begin{pmatrix}
1 & 0 & 0 & 0 \\ 
1 & 0 & 0 & 0 \\ 
1 & 0 & 0 & 0 \\
1 & 0 & 0 & 0 \\ 
\end{pmatrix};~~~
 P_{D}^{0001}=
\begin{pmatrix}
0 & 1 & 0 & 0 \\ 
0 & 1 & 0 & 0 \\ 
0 & 1 & 0 & 0 \\ 
0 & 1 & 0 & 0 \\
\end{pmatrix};~~~
 P_{D}^{0010}=
\begin{pmatrix}
1 & 0 & 0 & 0 \\
0 & 1 & 0 & 0 \\   
1 & 0 & 0 & 0 \\ 
0 & 1 & 0 & 0 \\
\end{pmatrix};~~~
 P_{D}^{0011}=
\begin{pmatrix}
0 & 1 & 0 & 0 \\ 
1 & 0 & 0 & 0 \\ 
0 & 1 & 0 & 0 \\ 
1 & 0 & 0 & 0 \\
\end{pmatrix};\\\\
 P_{D}^{0100}=
\begin{pmatrix}
0 & 0 & 1 & 0 \\ 
0 & 0 & 1 & 0 \\ 
0 & 0 & 1 & 0 \\ 
0 & 0 & 1 & 0 \\
\end{pmatrix};~~~
 P_{D}^{0101}=
\begin{pmatrix}
0 & 0 & 0 & 1 \\  
0 & 0 & 0 & 1 \\
0 & 0 & 0 & 1 \\ 
0 & 0 & 0 & 1 \\
\end{pmatrix};~~~
 P_{D}^{0110}=
\begin{pmatrix}
0 & 0 & 1 & 0 \\ 
0 & 0 & 0 & 1 \\ 
0 & 0 & 1 & 0 \\ 
0 & 0 & 0 & 1 \\
\end{pmatrix};~~~
 P_{D}^{0111}=
\begin{pmatrix}
0 & 0 & 0 & 1 \\ 
0 & 0 & 1 & 0 \\ 
0 & 0 & 0 & 1 \\ 
0 & 0 & 1 & 0 \\
\end{pmatrix};\\\\
 P_{D}^{1000}=
\begin{pmatrix}
1 & 0 & 0 & 0 \\
1 & 0 & 0 & 0 \\
0 & 0 & 1 & 0 \\ 
0 & 0 & 1 & 0 \\
\end{pmatrix};~~~
 P_{D}^{1001}=
\begin{pmatrix}
0 & 1 & 0 & 0 \\ 
0 & 1 & 0 & 0 \\ 
0 & 0 & 0 & 1 \\
0 & 0 & 0 & 1 \\
\end{pmatrix};~~~
 P_{D}^{1010}=
\begin{pmatrix}
1 & 0 & 0 & 0 \\ 
0 & 1 & 0 & 0 \\ 
0 & 0 & 1 & 0 \\ 
0 & 0 & 0 & 1 \\
\end{pmatrix};~~~
 P_{D}^{1011}=
\begin{pmatrix}
0 & 1 & 0 & 0 \\ 
1 & 0 & 0 & 0 \\ 
0 & 0 & 0 & 1 \\ 
0 & 0 & 1 & 0 \\
\end{pmatrix}; \\\\
 P_{D}^{1100}=
\begin{pmatrix}
0 & 0 & 1 & 0 \\
0 & 0 & 1 & 0 \\ 
1 & 0 & 0 & 0 \\ 
1 & 0 & 0 & 0 \\
\end{pmatrix};~~~
 P_{D}^{1101}=
\begin{pmatrix}
0 & 0 & 0 & 1 \\ 
0 & 0 & 0 & 1 \\ 
0 & 1 & 0 & 0 \\
0 & 1 & 0 & 0 \\
\end{pmatrix};~~~
 P_{D}^{1110}=
\begin{pmatrix}
0 & 0 & 1 & 0 \\  
0 & 0 & 0 & 1 \\
1 & 0 & 0 & 0 \\ 
0 & 1 & 0 & 0 \\
\end{pmatrix};~~~
 P_{D}^{1111}=
\begin{pmatrix}
0 & 0 & 0 & 1 \\  
0 & 0 & 1 & 0 \\ 
0 & 1 & 0 & 0 \\
1 & 0 & 0 & 0 \\
\end{pmatrix}
\end{align*}

The  probability $p_D(00|DE)$ takes the value $1$ if $de=00$ in $P^{(\alpha\beta\gamma\epsilon)(de)}_D$.
Using the above mentioned notations, the $16$ deterministic boxes $P^{(\alpha\beta\gamma\epsilon)(00)}_D$ are given in the matrix form as follows:
\[
\scalemath{0.8}{
\begin{aligned}\label{D00}
\begin{split}
\begin{pmatrix}
1 & 0 & 0 & 0 \\
1 & 0 & 0 & 0 & 0 & 0 & 0 & 0 \\
1 & 0 & 0 & 0 & 0 & 0 & 0 & 0 \\ 
1 & 0 & 0 & 0 \\ 
1 & 0 & 0 & 0  \\
\end{pmatrix};~~~
\begin{pmatrix}
0 & 1 & 0 & 0 \\  
0 & 1 & 0 & 0 & 0 & 0 & 0 & 0 \\ 
0 & 1 & 0 & 0 & 0 & 0 & 0 & 0 \\ 
0 & 1 & 0 & 0 \\ 
1& 0 & 0 & 0  \\
\end{pmatrix};~~~
\begin{pmatrix}
1 & 0 & 0 & 0 \\  
0 & 1 & 0 & 0  & 0 & 0 & 0 & 0 \\  
1 & 0 & 0 & 0  & 0 & 0 & 0 & 0 \\  
0 & 1 & 0 & 0 \\ 
1 & 0 & 0 & 0  \\
\end{pmatrix};~~~
\begin{pmatrix}
0 & 1 & 0 & 0 \\ 
1 & 0 & 0 & 0 & 0 & 0 & 0 & 0 \\ 
0 & 1 & 0 & 0 & 0 & 0 & 0 & 0 \\ 
1 & 0 & 0 & 0 \\ 
1& 0 & 0 & 0  \\
\end{pmatrix};\\\\
\begin{pmatrix}
0 & 0 & 1 & 0 \\  
0 & 0 & 1 & 0 & 0 & 0 & 0 & 0 \\ 
0 & 0 & 1 & 0 & 0 & 0 & 0 & 0 \\ 
0 & 0 & 1 & 0 \\  
1 & 0 & 0 & 0  \\
\end{pmatrix};~~~
\begin{pmatrix}
0 & 0 & 0 & 1 \\ 
0 & 0 & 0 & 1 & 0 & 0 & 0 & 0 \\ 
0 & 0 & 0 & 1 & 0 & 0 & 0 & 0 \\ 
0 & 0 & 0 & 1 \\  
1 & 0 & 0 & 0  \\
\end{pmatrix};~~~
\begin{pmatrix}
0 & 0 & 1 & 0 \\ 
0 & 0 & 0 & 1 & 0 & 0 & 0 & 0 \\ 
0 & 0 & 1 & 0 & 0 & 0 & 0 & 0 \\
0 & 0 & 0 & 1 \\ 
1 & 0 & 0 & 0  \\
\end{pmatrix};~~~
\begin{pmatrix}
0 & 0 & 0 & 1 \\ 
0 & 0 & 1 & 0 & 0 & 0 & 0 & 0 \\ 
0 & 0 & 0 & 1 & 0 & 0 & 0 & 0 \\ 
0 & 0 & 1 & 0 \\ 
1 & 0 & 0 & 0  \\
\end{pmatrix};\\\\
\begin{pmatrix}
1 & 0 & 0 & 0 \\
1 & 0 & 0 & 0 & 0 & 0 & 0 & 0 \\
0 & 0 & 1 & 0 & 0 & 0 & 0 & 0 \\
0 & 0 & 1 & 0 \\ 
1 & 0 & 0 & 0  \\
\end{pmatrix};~~~
\begin{pmatrix}  
0 & 1 & 0 & 0 \\ 
0 & 1 & 0 & 0 & 0 & 0 & 0 & 0 \\
0 & 0 & 0 & 1 & 0 & 0 & 0 & 0 \\
0 & 0 & 0 & 1 \\
1& 0 & 0 & 0  \\
\end{pmatrix};~~~
\begin{pmatrix}
1 & 0 & 0 & 0 \\
0 & 1 & 0 & 0 & 0 & 0 & 0 & 0 \\
0 & 0 & 1 & 0 & 0 & 0 & 0 & 0 \\
0 & 0 & 0 & 1 \\
1 & 0 & 0 & 0  \\
\end{pmatrix};~~~
\begin{pmatrix}
0 & 1 & 0 & 0 \\
1 & 0 & 0 & 0 & 0 & 0 & 0 & 0 \\
0 & 0 & 0 & 1 & 0 & 0 & 0 & 0 \\
0 & 0 & 1 & 0 \\
1& 0 & 0 & 0  \\
\end{pmatrix}; \\\\
\begin{pmatrix}
0 & 0 & 1 & 0 \\
0 & 0 & 1 & 0 & 0 & 0 & 0 & 0\\
1 & 0 & 0 & 0 & 0 & 0 & 0 & 0 \\
1 & 0 & 0 & 0 \\
1 & 0 & 0 & 0  \\
\end{pmatrix};~~~
\begin{pmatrix}
0 & 0 & 0 & 1 \\
0 & 0 & 0 & 1 & 0 & 0 & 0 & 0\\
0 & 1 & 0 & 0 & 0 & 0 & 0 & 0 \\
0 & 1 & 0 & 0 \\
1 & 0 & 0 & 0  \\
\end{pmatrix};~~~
\begin{pmatrix}
0 & 0 & 1 & 0 \\
0 & 0 & 0 & 1 & 0 & 0 & 0 & 0 \\
1 & 0 & 0 & 0 & 0 & 0 & 0 & 0 \\
0 & 1 & 0 & 0 \\
1 & 0 & 0 & 0  \\
\end{pmatrix};~~~
\begin{pmatrix}
0 & 0 & 0 & 1 \\
0 & 0 & 1 & 0 & 0 & 0 & 0 & 0 \\
0 & 1 & 0 & 0 & 0 & 0 & 0 & 0 \\
1 & 0 & 0 & 0 \\
1 & 0 & 0 & 0  \\
\end{pmatrix},
\end{split}
\end{aligned}
}\]
respectively, the $16$ deterministic boxes for $de=10$ in $P^{(\alpha\beta\gamma\epsilon)(de)}_D$ are given as follows:
\[
\scalemath{0.8}{
\begin{aligned}\label{D10}
\begin{split}
\begin{pmatrix}
1 & 0 & 0 & 0 \\
0 & 0 & 0 & 0 & 1 & 0 & 0 & 0\\
1 & 0 & 0 & 0 & 0 & 0 & 0 & 0\\
1 & 0 & 0 & 0 \\
0 & 0 & 1 & 0  \\
\end{pmatrix};~~~
\begin{pmatrix}
0 & 1 & 0 & 0 \\
0 & 0 & 0 & 0 & 0 & 1 & 0 & 0\\
0 & 1 & 0 & 0 & 0 & 0 & 0 & 0\\
0 & 1 & 0 & 0 \\
0& 0 & 1 & 0  \\
\end{pmatrix};~~~
\begin{pmatrix}
1 & 0 & 0 & 0 \\
0 & 0 & 0 & 0 & 0 & 1 & 0 & 0 \\
1 & 0 & 0 & 0 & 0 & 0 & 0 & 0 \\
0 & 1 & 0 & 0 \\
0 & 0 & 1 & 0  \\
\end{pmatrix};~~~
\begin{pmatrix}
0 & 1 & 0 & 0 \\
0 & 0 & 0 & 0 & 1 & 0 & 0 & 0\\
0 & 1 & 0 & 0 & 0 & 0 & 0 & 0\\
1 & 0 & 0 & 0 \\
0& 0 & 1 & 0  \\
\end{pmatrix};\\\\
\begin{pmatrix}
0 & 0 & 1 & 0 \\
0 & 0 & 0 & 0 & 0 & 0 & 1 & 0\\
0 & 0 & 1 & 0  & 0 & 0 & 0 & 0\\
0 & 0 & 1 & 0\\
0 & 0 & 1 & 0  \\
\end{pmatrix};~~~
\begin{pmatrix}
0 & 0 & 0 & 1 \\
0 & 0 & 0 & 0 & 0 & 0 & 0 & 1 \\
0 & 0 & 0 & 1  & 0 & 0 & 0 & 0\\
0 & 0 & 0 & 1\\
0 & 0 & 1 & 0  \\
\end{pmatrix};~~~
\begin{pmatrix}
0 & 0 & 1 & 0 \\
0 & 0 & 0 & 0 & 0 & 0 & 0 & 1\\
0 & 0 & 1 & 0 & 0 & 0 & 0 & 0\\
0 & 0 & 0 & 1 \\
0 & 0 & 1 & 0  \\
\end{pmatrix};~~~
\begin{pmatrix}
0 & 0 & 0 & 1 \\
0 & 0 & 0 & 0 & 0 & 0 & 1 & 0\\
0 & 0 & 0 & 1 & 0 & 0 & 0 & 0\\
0 & 0 & 1 & 0 \\
0 & 0 & 1 & 0  \\
\end{pmatrix};\\\\
\begin{pmatrix}
1 & 0 & 0 & 0 \\
0 & 0 & 0 & 0 & 1 & 0 & 0 & 0\\
0 & 0 & 1 & 0 & 0 & 0 & 0 & 0\\
0 & 0 & 1 & 0 \\
0 & 0 & 1 & 0  \\
\end{pmatrix};~~~
\begin{pmatrix}
0 & 1 & 0 & 0 \\
0 & 0 & 0 & 0 & 0 & 1 & 0 & 0\\
0 & 0 & 0 & 1 & 0 & 0 & 0 & 0\\
0 & 0 & 0 & 1 \\
0& 0 & 1 & 0  \\
\end{pmatrix};~~~
\begin{pmatrix}
1 & 0 & 0 & 0 \\
0 & 0 & 0 & 0 & 0 & 1 & 0 & 0\\
0 & 0 & 1 & 0 & 0 & 0 & 0 & 0\\
0 & 0 & 0 & 1 \\
0 & 0 & 1 & 0  \\
\end{pmatrix};~~~
\begin{pmatrix}
0 & 1 & 0 & 0 \\
0 & 0 & 0 & 0  & 1 & 0 & 0 & 0\\
0 & 0 & 0 & 1 & 0 & 0 & 0 & 0\\
0 & 0 & 1 & 0 \\
0& 0 & 1 & 0  \\
\end{pmatrix}; \\\\
\begin{pmatrix}
0 & 0 & 1 & 0 \\
0 & 0 & 0 & 0 & 0 & 0 & 1 & 0\\
1 & 0 & 0 & 0 & 0 & 0 & 0 & 0\\
1 & 0 & 0 & 0 \\
0 & 0 & 1 & 0  \\
\end{pmatrix};~~~
\begin{pmatrix}
0 & 0 & 0 & 1 \\
0 & 0 & 0 & 0 & 0 & 0 & 0 & 1\\
0 & 1 & 0 & 0 & 0 & 0 & 0 & 0\\
0 & 1 & 0 & 0 \\
0 & 0 & 1 & 0  \\
\end{pmatrix};~~~
\begin{pmatrix}
0 & 0 & 1 & 0 \\
0 & 0 & 0 & 0 & 0 & 0 & 0 & 1\\
1 & 0 & 0 & 0 & 0 & 0 & 0 & 0\\
0 & 1 & 0 & 0 \\
0 & 0 & 1 & 0  \\
\end{pmatrix};~~~
\begin{pmatrix}
0 & 0 & 0 & 1 \\
0 & 0 & 0 & 0 & 0 & 0 & 1 & 0\\
0 & 1 & 0 & 0 & 0 & 0 & 0 & 0\\
1 & 0 & 0 & 0 \\
0 & 0 & 1 & 0  \\
\end{pmatrix},
\end{split}
\end{aligned}
}\]
respectively, the $16$ deterministic boxes for $de=01$ in $P^{(\alpha\beta\gamma\epsilon)(de)}_D$ are given as follows:
\[\Scale[0.8]{
\begin{aligned}\label{D01}
\begin{split}
\begin{pmatrix}
1 & 0 & 0 & 0 \\
1 & 0 & 0 & 0 & 0 & 0 & 0 & 0 \\
0 & 0 & 0 & 0 & 1 & 0 & 0 & 0 \\
1 & 0 & 0 & 0 \\
0 & 1 & 0 & 0  \\
\end{pmatrix};~~~
\begin{pmatrix}
0 & 1 & 0 & 0 \\
0 & 1 & 0 & 0 & 0 & 0 & 0 & 0\\
0 & 0 & 0 & 0 & 0 & 1 & 0 & 0\\
0 & 1 & 0 & 0 \\
0& 1 & 0 & 0  \\
\end{pmatrix};~~~
\begin{pmatrix}
1 & 0 & 0 & 0 \\
0 & 1 & 0 & 0 & 0 & 0 & 0 & 0 \\
0 & 0 & 0 & 0 & 1 & 0 & 0 & 0 \\
0 & 1 & 0 & 0 \\
0 & 1 & 0 & 0  \\
\end{pmatrix};~~~
\begin{pmatrix}
0 & 1 & 0 & 0 \\
1 & 0 & 0 & 0 & 0 & 0 & 0 & 0\\
0 & 0 & 0 & 0  & 0 & 1 & 0 & 0 \\
1 & 0 & 0 & 0\\
0& 1 & 0 & 0  \\
\end{pmatrix};\\\\
\begin{pmatrix}
0 & 0 & 1 & 0 \\
0 & 0 & 1 & 0 & 0 & 0 & 0 & 0\\
0 & 0 & 0 & 0 & 0 & 0 & 1 & 0 \\
0 & 0 & 1 & 0 \\
0 & 1 & 0 & 0  \\
\end{pmatrix};~~~
\begin{pmatrix}
0 & 0 & 0 & 1 \\
0 & 0 & 0 & 1 & 0 & 0 & 0 & 0 \\
0 & 0 & 0 & 0 & 0 & 0 & 0 & 1 \\
0 & 0 & 0 & 1 \\
0 & 1 & 0 & 0  \\
\end{pmatrix};~~~
\begin{pmatrix}
0 & 0 & 1 & 0 \\
0 & 0 & 0 & 1 & 0 & 0 & 0 & 0 \\
0 & 0 & 0 & 0 & 0 & 0 & 1 & 0 \\
0 & 0 & 0 & 1 \\
0 & 1 & 0 & 0  \\
\end{pmatrix};~~~
\begin{pmatrix}
0 & 0 & 0 & 1 \\
0 & 0 & 1 & 0 & 0 & 0 & 0 & 0 \\
0 & 0 & 0 & 0 & 0 & 0 & 0 & 1\\
0 & 0 & 1 & 0 \\
0 & 1 & 0 & 0  \\
\end{pmatrix};\\\\
\begin{pmatrix}
1 & 0 & 0 & 0 \\
1 & 0 & 0 & 0 & 0 & 0 & 0 & 0 \\
0 & 0 & 0 & 0 & 0 & 0 & 1 & 0 \\
0 & 0 & 1 & 0 \\
0 & 1 & 0 & 0  \\
\end{pmatrix};~~~
\begin{pmatrix}
0 & 1 & 0 & 0 \\
0 & 1 & 0 & 0 & 0 & 0 & 0 & 0\\
0 & 0 & 0 & 0 & 0 & 0 & 0 & 1\\
0 & 0 & 0 & 1 \\
0& 1 & 0 & 0  \\
\end{pmatrix};~~~
\begin{pmatrix}
1 & 0 & 0 & 0 \\
0 & 1 & 0 & 0 & 0 & 0 & 0 & 0\\
0 & 0 & 0 & 0 & 0 & 0 & 1 & 0\\
0 & 0 & 0 & 1 \\
0 & 1 & 0 & 0  \\
\end{pmatrix};~~~
\begin{pmatrix}
0 & 1 & 0 & 0 \\
1 & 0 & 0 & 0 & 0 & 0 & 0 & 0\\
0 & 0 & 0 & 0 & 0 & 0 & 0 & 1\\
0 & 0 & 1 & 0 \\
0& 1 & 0 & 0  \\
\end{pmatrix}; \\\\
\begin{pmatrix}
0 & 0 & 1 & 0 \\
0 & 0 & 1 & 0 & 0 & 0 & 0 & 0\\
0 & 0 & 0 & 0  & 1 & 0 & 0 & 0\\
1 & 0 & 0 & 0\\
0 & 1 & 0 & 0  \\
\end{pmatrix};~~~
\begin{pmatrix}
0 & 0 & 0 & 1 \\
0 & 0 & 0 & 1 & 0 & 0 & 0 & 0\\
0 & 0 & 0 & 0 & 0 & 1 & 0 & 0\\
0 & 1 & 0 & 0 \\
0 & 1 & 0 & 0  \\
\end{pmatrix};~~~
\begin{pmatrix}
0 & 0 & 1 & 0 \\
0 & 0 & 0 & 1 & 0 & 0 & 0 & 0\\
0 & 0 & 0 & 0 & 1 & 0 & 0 & 0\\
0 & 1 & 0 & 0 \\
0 & 1 & 0 & 0  \\
\end{pmatrix};~~~
\begin{pmatrix}
0 & 0 & 0 & 1 \\
0 & 0 & 1 & 0 & 0 & 0 & 0 & 0\\
0 & 0 & 0 & 0 & 0 & 1 & 0 & 0\\
1 & 0 & 0 & 0 \\
0 & 1 & 0 & 0  \\
\end{pmatrix},
\end{split}
\end{aligned}
}\]
and, finally, the $16$ deterministic boxes for $de=11$ in $P^{(\alpha\beta\gamma\epsilon)(de)}_D$ are given as follows:
\[\Scale[0.8]{
\begin{aligned}\label{D11}
\begin{split}
\begin{pmatrix}
1 & 0 & 0 & 0 \\
0 & 0 & 0 & 0 & 1 & 0 & 0 & 0\\
0 & 0 & 0 & 0  & 1 & 0 & 0 & 0\\
1 & 0 & 0 & 0\\
0 & 0 & 0 & 1  \\
\end{pmatrix};~~~
\begin{pmatrix}
0 & 1 & 0 & 0 \\
0 & 0 & 0 & 0  & 0 & 1 & 0 & 0\\
0 & 0 & 0 & 0 & 0 & 1 & 0 & 0\\
0 & 1 & 0 & 0 \\
0& 0 & 0 & 1  \\
\end{pmatrix};~~~
\begin{pmatrix}
1 & 0 & 0 & 0 \\
0 & 0 & 0 & 0 & 0 & 1 & 0 & 0\\
0 & 0 & 0 & 0 & 1 & 0 & 0 & 0\\
0 & 1 & 0 & 0 \\
0 & 0 & 0 & 1  \\
\end{pmatrix};~~~
\begin{pmatrix}
0 & 1 & 0 & 0 \\
0 & 0 & 0 & 0 & 1 & 0 & 0 & 0\\
0 & 0 & 0 & 0 & 0 & 1 & 0 & 0\\
1 & 0 & 0 & 0 \\
0& 0 & 0 & 1  \\
\end{pmatrix};\\\\
\begin{pmatrix}
0 & 0 & 1 & 0 \\
0 & 0 & 0 & 0 & 0 & 0 & 1 & 0\\
0 & 0 & 0 & 0 & 0 & 0 & 1 & 0\\
0 & 0 & 1 & 0 \\
0 & 0 & 0 & 1  \\
\end{pmatrix};~~~
\begin{pmatrix}
0 & 0 & 0 & 1 \\
0 & 0 & 0 & 0 & 0 & 0 & 0 & 1\\
0 & 0 & 0 & 0 & 0 & 0 & 0 & 1\\
0 & 0 & 0 & 1 \\
0 & 0 & 0 & 1  \\
\end{pmatrix};~~~
\begin{pmatrix}
0 & 0 & 1 & 0 \\
0 & 0 & 0 & 0 & 0 & 0 & 0 & 1\\
0 & 0 & 0 & 0 & 0 & 0 & 1 & 0\\
0 & 0 & 0 & 1 \\
0 & 0 & 0 & 1  \\
\end{pmatrix};~~~
\begin{pmatrix}
0 & 0 & 0 & 1 \\
0 & 0 & 0 & 0 & 0 & 0 & 1 & 0\\
0 & 0 & 0 & 0 & 0 & 0 & 0 & 1\\
0 & 0 & 1 & 0 \\
0 & 0 & 0 & 1  \\
\end{pmatrix};\\\\
\begin{pmatrix}
1 & 0 & 0 & 0 \\
0 & 0 & 0 & 0 & 1 & 0 & 0 & 0\\
0 & 0 & 0 & 0 & 0 & 0 & 1 & 0\\
0 & 0 & 1 & 0 \\
0 & 0 & 0 & 1  \\
\end{pmatrix};~~~
\begin{pmatrix}
0 & 1 & 0 & 0 \\
0 & 0 & 0 & 0 & 0 & 1 & 0 & 0\\
0 & 0 & 0 & 0 & 0 & 0 & 0 & 1\\
0 & 0 & 0 & 1 \\
0& 0 & 0 & 1  \\
\end{pmatrix};~~~
\begin{pmatrix}
1 & 0 & 0 & 0 \\
0 & 0 & 0 & 0  & 0 & 1 & 0 & 0\\
0 & 0 & 0 & 0  & 0 & 0 & 1 & 0\\
0 & 0 & 0 & 1 \\
0 & 0 & 0 & 1  \\
\end{pmatrix};~~~
\begin{pmatrix}
0 & 1 & 0 & 0 \\
0 & 0 & 0 & 0  & 1 & 0 & 0 & 0\\
0 & 0 & 0 & 0 & 0 & 0 & 0 & 1\\
0 & 0 & 1 & 0 \\
0& 0 & 0 & 1  \\
\end{pmatrix}; \\\\
\begin{pmatrix}
0 & 0 & 1 & 0 \\
0 & 0 & 0 & 0 & 0 & 0 & 1 & 0\\
0 & 0 & 0 & 0 & 1 & 0 & 0 & 0\\
1 & 0 & 0 & 0 \\
0 & 0 & 0 & 1  \\
\end{pmatrix};~~~
\begin{pmatrix}
0 & 0 & 0 & 1 \\
0 & 0 & 0 & 0 & 0 & 0 & 0 & 1\\
0 & 0 & 0 & 0 & 0 & 1 & 0 & 0\\
0 & 1 & 0 & 0 \\
0 & 0 & 0 & 1  \\
\end{pmatrix};~~~
\begin{pmatrix}
0 & 0 & 1 & 0 \\
0 & 0 & 0 & 0 & 0 & 0 & 0 & 1\\
0 & 0 & 0 & 0 & 1 & 0 & 0 & 0\\
0 & 1 & 0 & 0 \\
0 & 0 & 0 & 1  \\
\end{pmatrix};~~~
\begin{pmatrix}
0 & 0 & 0 & 1 \\
0 & 0 & 0 & 0 & 0 & 0 & 1 & 0\\
0 & 0 & 0 & 0 & 0 & 1 & 0 & 0\\
1 & 0 & 0 & 0 \\
0 & 0 & 0 & 1  \\
\end{pmatrix},
\end{split}
\end{aligned}
}\]
respectively.

\section{Decomposition of the box (\ref{nPbncE}) in terms of noncontextual deterministic boxes}\label{CPnPbncE}
There are $16$ deterministic boxes that satisfy the zero probabilities of the box 
(\ref{nPbncE}). Any convex mixture of these boxes given by
\begin{align}
P&=p_1P^{0000(00)}_D+p_2P^{0101(00)}_D+p_3P^{1011(00)}_D+p_4P^{1110(00)}_D  \nonumber\\
&+p_5P^{0010(10)}_D+p_6P^{0111(10)}_D+p_7P^{1001(10)}_D+p_8P^{1100(10)}_D  \nonumber\\
&+p_9P^{0011(01)}_D+p_{10}P^{0110(01)}_D+p_{11}P^{1000(01)}_D+p_{12}P^{1101(01)}_D \nonumber\\ 
&+p_{13}P^{0001(11)}_D+p_{14}P^{0100(11)}_D+p_{15}P^{1010(11)}_D+p_{16}P^{1111(11)}_D,
\end{align}
has the zero probabilities of the box (\ref{nPbncE}).
To check the noncontextuality of the box (\ref{nPbncE}), it suffices to show if there are weights, $0\le p_i \le 1$, with $i=1,2\cdots 16$ and  $\sum_ip_i=1$, in the above decomposition to produce all other nonzero probabilities of 
the box (\ref{nPbncE}). Using such a decomposition, it has been checked that
the box admits the following decomposition:
\begin{align} \label{DBDnP}
P(c_i|C_i) &= \frac{1}{8}
P^{0000(00)}_D+\frac{1}{8}P^{0101(00)}_D+\frac{1}{24}P^{1011(00)}_D+\frac{1}{24}P^{1110(00)}_D  \nonumber \\
&+  \frac{1}{24}
P^{0010(10)}_D+\frac{1}{24}P^{0111(10)}_D+\frac{1}{24}P^{1001(10)}_D+\frac{1}{24}P^{1100(10)}_D  \nonumber \\
&+  \frac{1}{24}
P^{0011(01)}_D+\frac{1}{24}P^{0110(01)}_D+\frac{1}{24}P^{1000(01)}_D+\frac{1}{24}P^{1101(01)}_D  \nonumber \\
&+  \frac{1}{24}
P^{0001(11)}_D+\frac{1}{24}P^{0100(11)}_D+\frac{1}{8}P^{1010(11)}_D+\frac{1}{8}P^{1111(11)}_D.   
\end{align}
It then follows that the box is noncontextual as it admits a decomposition in terms of the noncontextual deterministic boxes.

\section{Examples of supernoncontextuality without superlocality}  \label{Exsncwsl}
The following separable state has a rank of $2$:
\begin{equation}\label{rank2}
\rho^{\texttt{rank}-2}_{\texttt{QQ}}=\frac{1}{2} \left( \ketbra{00}{00} +\ketbra{++}{++}  \right),
\end{equation}
where $\ket{+}$ is the $+1$ eigenstate of the Pauli observable $X_2$,
does not have superlocality as its cardinality of separable decomposition is $2$ \cite{JKC+24}.
Consider the box arising from the two-qubit state (\ref{rank2}) for the observables given by Eq. (\ref{ObsP}). This box is given by,
\begin{align}\label{sncbd}
P(c_i|C_i)&=
\begin{pmatrix}
\frac{5}{8} & \frac{1}{8} & \frac{1}{8} & \frac{1}{8} \\ \\[0.05cm]
\frac{1}{2} & 0 & 0 & 0 & 0 & \frac{1}{4} & \frac{1}{4} & 0 \\ \\[0.05cm]
\frac{1}{2} & 0 & 0 & 0 & 0 & \frac{1}{4} & \frac{1}{4} & 0 \\ \\[0.05cm]
\frac{5}{8} & \frac{1}{8} & \frac{1}{8} & \frac{1}{8}  \\\\[0.05cm]
\frac{1}{4} & \frac{1}{4} & \frac{1}{4} & \frac{1}{4} \\
\end{pmatrix}.
\end{align}
The above box is noncontextual as it has a noncontextual deterministic model as 
follows:
\begin{align} \label{disls}
P&=\frac{1}{8}\Big( 2P^{(0000)(00)} + P^{(0010)(10)} +P^{(0011)(01)} \\ \nonumber
&+P^{(1000)(01)}+P^{(1010)(11)}+P^{(1100)(10)} \\ \nonumber
&+P^{(1111)(11)}  \Big).
\end{align}

For the box (\ref{sncbd}), from the  decomposition of the box in terms of the deterministic boxes as given by Eq. (\ref{disls}), we have obtained the following noncontextual hidden variable model:
\begin{align}
P(c_i|C_i)&=\sum^{d_\lambda-1}_{\lambda=0} p(\lambda) P(c_i|C_i, \lambda),
\end{align}
with the dimension of the hidden variable $d_\lambda =7$.  Here $\sum_{\lambda} p(\lambda) = 1$, $P(c|C_i,\lambda)$ = $\{p(c|C_i,\lambda)\}_{c,C_i}$  denotes a  non-contextual box for the given $\lambda$ as in Eq. (\ref{disls}) ($\lambda$ occurs with probability $p(\lambda)=1/4$ for $\lambda=0,1$ and $p(\lambda)=1/8$, otherwise).  It has been checked that 
the simulation of the box using the noncontextual deterministic boxes requires 
the above-mentioned seven deterministic boxes. This implies that the box has the minimal $d_\lambda=7$. Therefore, we have obtained the following observation.
\begin{observation}
    The  box $P(c_i|C_i)$ given by Eq. (\ref{sncbd}) is supernoncontextual.  In other words, the box requires a hidden variable of dimension $d_\lambda = 7$ to be simulated using a noncontextual hidden variable model, 
on the other hand, it can be simulated quantum mechanically using a quantum system with global Hilbert space dimension $d^g_Q=4$.
\end{observation}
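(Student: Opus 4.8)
The plan is to establish the statement in three steps: (i) the quantum simulating dimension is $4$; (ii) the box is noncontextual and admits a noncontextual hidden-variable model with $d_\lambda=7$; (iii) no noncontextual model of smaller dimension exists, so $d_\lambda=7$ is minimal and, since $7>4$, the box is supernoncontextual.

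Step (i) is immediate: $\rho^{\texttt{rank}-2}_{\texttt{QQ}}$ of Eq.~(\ref{rank2}) is a state on $\mathbb{C}^2\otimes\mathbb{C}^2$, so $d^g_Q=4$, and a direct evaluation of the Born rule for the observables of Eq.~(\ref{ObsP}) returns the box $P(c_i|C_i)$ of Eq.~(\ref{sncbd}). For step (ii) I would substitute the seven noncontextual deterministic boxes appearing in Eq.~(\ref{disls}) (as tabulated in Appendix~\ref{DBPscen}) into the convex combination and check, context by context, that all five rows of the matrix (\ref{sncbd}) are reproduced; being a convex mixture of extremal noncontextual boxes, this model is noncontextual, and it has $d_\lambda=7$.

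The heart of the argument is step (iii). Since all weights in a decomposition $P=\sum_\lambda p(\lambda)\,D_\lambda$ into deterministic noncontextual boxes are nonnegative, every $D_\lambda$ occurring with positive weight must vanish on every coordinate where $P$ vanishes. The zero pattern of (\ref{sncbd}) in the contexts $C_1=A_0B_1D$ and $C_2=A_1B_0E$ forces the deterministic outcome triples $(a_0,b_1,d)$ and $(a_1,b_0,e)$ to lie in $\{(000),(011),(101)\}$; written in the parametrization $(\alpha\beta\gamma\epsilon)(de)$ of Appendix~\ref{DBPscen}, this both restricts the admissible $(\alpha\beta\gamma\epsilon)$ and pins $d,e$ uniquely for each admissible choice, leaving exactly nine candidate deterministic boxes out of the $64$. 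Restricting the unknown weights to these nine, I would then read off the linear system imposed by the four-outcome distributions of $C_0$, $C_3$ and $C_4$: the outcome $(11)$ of $C_0$, the outcome $(10)$ of $C_3$ and the outcome $(00)$ of $C_4$ are each supported by a single one of the nine candidates, which immediately fixes three of the weights, and back-substituting through the remaining marginal equations determines all nine weights uniquely, two of them being forced to zero. Hence Eq.~(\ref{disls}) is the \emph{unique} convex decomposition of $P$ into noncontextual deterministic boxes; in particular its support of seven boxes is also a lower bound, so any noncontextual hidden-variable model has $d_\lambda\ge7$. Together with step (ii) this gives minimality, and $d_\lambda=7>4=d^g_Q$ establishes supernoncontextuality.

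I expect the bookkeeping in step (iii) to be the only real obstacle: one must verify that the enumeration of the nine surviving deterministic boxes is exhaustive and that the marginal equations genuinely force a \emph{unique} solution — it is uniqueness, not merely the existence of a seven-box solution, that rules out a hypothetical five- or six-box decomposition. All of this is finite and elementary, but it has to be carried through without slips.
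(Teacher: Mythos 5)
Your overall route is the same as the paper's: the paper exhibits the seven-term deterministic decomposition of Eq.~(\ref{disls}) and then simply asserts that ``it has been checked'' that seven deterministic boxes are required, whereas you make the minimality check explicit. Your accounting is essentially right: the zero pattern of (\ref{sncbd}) in the contexts $C_1$ and $C_2$ leaves exactly nine admissible deterministic boxes (the seven of Eq.~(\ref{disls}) plus $P^{(0001)(11)}_D$ and $P^{(0100)(11)}_D$), the marginal equations admit a unique nonnegative solution in which the two extra weights vanish, and hence Eq.~(\ref{disls}) is the unique decomposition and $d_\lambda=7$ is minimal within the paper's convention of deterministic response boxes. (In fact nothing is lost by that convention here: any factorized response functions supported inside the zero pattern of $C_1$ and $C_2$ are forced to be deterministic for all six observables, since $\{(000),(011),(101)\}$ contains no nontrivial product subset.)

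One concrete slip in your step (iii): the claim that the outcomes $(11)$ of $C_0$, $(10)$ of $C_3$ and $(00)$ of $C_4$ are each supported by a single candidate is not correct as stated — $(10)$ of $C_3$ is supported by two of the nine candidates, namely $P^{(0100)(11)}_D$ and $P^{(1000)(01)}_D$. The outcomes that are uniquely supported are $(11)$ of $C_0$ (only $P^{(1111)(11)}_D$), $(11)$ of $C_3$ (only $P^{(1010)(11)}_D$) and $(00)$ of $C_4$ (only $P^{(0000)(00)}_D$); these fix the corresponding weights at $1/8$, $1/8$ and $1/4$. The $C_4$ outcome $(11)$ equation then reads $p_{(0001)(11)}+p_{(0100)(11)}+\tfrac18+\tfrac18=\tfrac14$, which together with nonnegativity forces the two extra candidates to have zero weight, and the remaining two-outcome marginals of $C_0$ and $C_3$ fix the other four weights at $1/8$ each, reproducing Eq.~(\ref{disls}). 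With that correction your argument goes through and establishes the observation, indeed with more detail than the paper provides.
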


We now proceed to provide our next example of supernoncontextuality  without superlocality. Before that, we provide an example of a noncontextual box that arises from a classically-correlated state for incompatible measurements given by Eq. (\ref{ObsP}), but has no supernoncontextuality.  
Let us consider the following noncontextual box:
\begin{align}\label{cc2q}
P(c_i|C_i)&=
\begin{pmatrix}
\frac{1}{2} & 0 & 0 & \frac{1}{2} \\ \\[0.05cm]
\frac{1}{4} & 0 & 0 & \frac{1}{4} & 0 & \frac{1}{4} & \frac{1}{4} & 0\\ \\[0.05cm]
\frac{1}{4} & 0 & 0 & \frac{1}{4} & 0 & \frac{1}{4} & \frac{1}{4} & 0\\ \\[0.05cm]
\frac{1}{4} & \frac{1}{4} & \frac{1}{4} & \frac{1}{4} \\ \\[0.05cm]
\frac{1}{4} & \frac{1}{4} & \frac{1}{4} & \frac{1}{4}  \\
\end{pmatrix}
\end{align}
which can be produced using the CC state $\rho_{\texttt{CC}}$ given by \begin{equation}\label{ccs}
\rho_{\texttt{CC}}=\frac{1}{2} (\ketbra{00}{00}+ \ketbra{11}{11}).
\end{equation}
for the observables given by Eq. (\ref{ObsP}).
The above box (\ref{cc2q}) can be reproduced using a
noncontextual hidden variable model with $d_\lambda=4$ as follows:
\begin{align}
P(c_i|C_i)&=\sum^{3}_{\lambda=0} p(\lambda) P(c_i|C_i, \lambda).
\end{align}
Here $\sum_{\lambda} p(\lambda) = 1$, $P(c|C_i,\lambda)$ = $\{p(c_i|C_i,\lambda)\}_{c_i,C_i}$  denotes a deterministic non-contextual box for the given $\lambda$ given by 
$P(c_i|C_i,0)=P^{(0000)(00)}_D$, $P(c_i|C_i,1)=P^{(0111)(10)}_D$,
$P(c_i|C_i,2)=P^{(1101)(01)}_D$ and $P(c_i|C_i,3)=P^{(1010)(11)}_D$
are the deterministic boxes given in Appendix \ref{matrixNota}.
($\lambda$ occurs with probability $p(\lambda)=\frac{1}{4}$, $\forall$ $\lambda$).

In the above example of a noncontextual box which has the minimal $d_\lambda=4$, correlations are present in the three 
contexts $C_0$, $C_1$ and $C_2$. We now proceed to provide another example of
a noncontextual box arising from the classically-correlated state (\ref{ccs}), but having supernoncontextuality.
In this example, the number of contexts having correlations is increased to the four contexts $C_0$, $C_1$, $C_2$ and $C_3$. This is achieved by invoking the following choice of observables:
\begin{small}
\begin{align}\label{Obscc}
\begin{split}
&A_0=\frac{Z_2+X_2}{\sqrt{2}} \otimes \openone_2, \quad B_0= \openone_2 \otimes \frac{Z_2+X_2}{\sqrt{2}},  \\
&B_1= \openone_2 \otimes \frac{Z_2-X_2}{\sqrt{2}}, \quad A_1=\frac{Z_2-X_2}{\sqrt{2}}\otimes \openone_2, \\
&D=\frac{Z_2+X_2}{\sqrt{2}} \otimes \frac{Z_2-X_2}{\sqrt{2}}, \quad E=  \frac{Z_2-X_2}{\sqrt{2}} \otimes  \frac{Z_2+X_2}{\sqrt{2}}.  
\end{split}
\end{align}
\end{small}
For this choice of observables, we obtain the following noncontextual box:
\begin{align}\label{cc2qhd}
P(c_i|C_i)&=
\begin{pmatrix}
\frac{3}{8} & \frac{1}{8} & \frac{1}{8} & \frac{3}{8} \\ \\[0.05cm]
\frac{3}{8} & 0 & 0 & \frac{3}{8} & 0 & \frac{1}{8} & \frac{1}{8} & 0\\ \\[0.05cm]
\frac{3}{8} & 0 & 0 & \frac{3}{8} & 0 & \frac{1}{8} & \frac{1}{8} & 0\\ \\[0.05cm]
\frac{3}{8} & \frac{1}{8} & \frac{1}{8} & \frac{3}{8} \\ \\[0.05cm]
\frac{1}{2} & \frac{1}{4} & \frac{1}{4} &  0  \\
\end{pmatrix}
\end{align}
The above box can be decomposed in terms of the noncontextual deterministic boxes 
in more than one way. Out of these models, we present the following 
model which provides a noncontextual hidden variable model with  $d_\lambda$ minimized. The box (\ref{cc2qhd}) has the following noncontextual deterministic hidden variable model:
\begin{align}
P(c_i|C_i)&=\sum^{d_\lambda-1}_{\lambda=0} p(\lambda) P(c_i|C_i, \lambda),
\end{align}
with dimension of the hidden variable $d_\lambda =6$.  Here $\sum_{\lambda} p(\lambda) = 1$, $P(c|C_i,\lambda)$ := $\{p(c_i|C_i,\lambda)\}_{c_i,C_i}$  denotes a deterministic non-contextual box for the given $\lambda$ given by 
$P(c_i|C_i,0)=P^{(0000)(00)}_D$, $P(c_i|C_i,1)=P^{(0101)(00)}_D$,
$P(c_i|C_i,2)=P^{(0010)(10)}_D$, $P(c_i|C_i,3)=P^{(0111)(10)}_D$, 
$P(c_i|C_i,4)=P^{(0011)(01)}_D$ and $P(c_i|C_i,5)=P^{(0110)(01)}_D$,
are the deterministic boxes given in Appendix \ref{matrixNota}
($\lambda$ occurs with probability $p(\lambda)=\frac{1}{4}$, for $\lambda=0,1$ and $p(\lambda)=\frac{1}{8}$, otherwise).  We have now obtained the following observation.
\begin{observation}
    The  box $P(c_i|C_i)$ given by Eq. (\ref{cc2qhd}) is supernoncontextual.  In other words, the box requires a hidden variable of dimension $d_\lambda = 6$ to be simulated using a noncontextual hidden variable model, 
on the other hand, it can be simulated quantum mechanically using a quantum system with lower global Hilbert space dimension $d^g_Q=4$. Supernoncontextuality of the box (\ref{cc2qhd}) as quantified by $d_\lambda$
   is lower than that of the box (\ref{sncbd}).
\end{observation}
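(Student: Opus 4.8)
The plan is to establish the statement in three pieces: that the box (\ref{cc2qhd}) is reproduced by a quantum system of global dimension $d^{\rm g}_{\rm Q}=4$; that the minimal noncontextual hidden-variable dimension needed to simulate it is exactly $d_\lambda=6$; and that $6$ is strictly below the value $7$ already proved for the box (\ref{sncbd}) in the preceding observation. The last piece is then immediate, so the substance lies in the first two, and within the second in the lower bound $d_\lambda\ge6$.

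For the quantum realization I would evaluate $p(c_i|C_i)=\Tr[\rho_{\texttt{CC}}\,\Pi_{c_i|C_i}]$ directly, where $\Pi_{c_i|C_i}$ is the product of spectral projectors of the mutually commuting observables in the context $C_i$ of Eq. (\ref{Obscc}) and $\rho_{\texttt{CC}}$ is the state (\ref{ccs}). Since all observables in (\ref{Obscc}) are built from $\tfrac{Z_2\pm X_2}{\sqrt2}$, whose $\pm1$ eigenstates sit at angle $\pi/8$ from the computational basis, each matrix entry collapses to a short identity such as $\cos^4(\pi/8)+\sin^4(\pi/8)=\tfrac34$, reproducing the matrix (\ref{cc2qhd}); as $\rho_{\texttt{CC}}$ acts on $\mathbb{C}^2\otimes\mathbb{C}^2$ this yields $d^{\rm g}_{\rm Q}=4$. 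The bound $d_\lambda\le6$ is then immediate from the six-term mixture of noncontextual deterministic boxes displayed just before the statement: one verifies it reproduces every nonzero entry of (\ref{cc2qhd}), which is mechanical, and this also re-confirms noncontextuality.

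The core is the lower bound $d_\lambda\ge6$. First I would restrict to the deterministic boxes that can appear with positive weight in any valid decomposition: such a box must vanish wherever (\ref{cc2qhd}) vanishes, so the zeros of the $D$-context force $d=a_0\oplus b_1$, those of the $E$-context force $e=a_1\oplus b_0$, and $p(11|DE)=0$ forces $(a_0\oplus b_1,\,a_1\oplus b_0)\ne(1,1)$; using $a_x=\alpha x\oplus\beta$, $b_y=\gamma y\oplus\epsilon$ this leaves exactly the $12$ deterministic boxes indexed by quadruples $(a_0,b_0,a_1,b_1)\in\{0,1\}^4$ with $(a_0\oplus b_1,\,a_1\oplus b_0)\ne(1,1)$. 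A decomposition of (\ref{cc2qhd}) is then a probability weight $\{p_{(a_0b_0a_1b_1)}\}$ on these $12$ points whose two-variable marginals on $(a_0,b_0)$, $(a_1,b_1)$, $(a_0,b_1)$ and $(a_1,b_0)$ all equal $q:=(\tfrac38,\tfrac18,\tfrac18,\tfrac38)$ (the $DE$-marginal then comes out automatically). From these equalities every ``off-diagonal'' weight is $\le\tfrac18$, while the $(a_0,b_0)$-marginal makes $p_{(0000)}$ plus two off-diagonal weights equal $\tfrac38$, so $p_{(0000)}\ge\tfrac18$, and symmetrically $p_{(1111)}\ge\tfrac18$; hence both ``diagonal'' points are always used. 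The points with $(a_0,b_0)=(0,1)$ and those with $(a_0,b_0)=(1,0)$ form two disjoint blocks of size $3$, each summing to $\tfrac18$, so at least one point of each is used as well; a short case analysis on which point of each active block is chosen — driven by the tight relations like $p_{(0001)}+p_{(0111)}=\tfrac18$ and $p_{(0010)}+p_{(0110)}+p_{(1110)}=\tfrac18$ — shows that using only one more point (five in all) is always inconsistent. Hence $d_\lambda\ge6$, and with the six-term realization $d_\lambda=6$ exactly; since $6>4=d^{\rm g}_{\rm Q}$ the box is supernoncontextual, and $6<7$ closes the comparison with (\ref{sncbd}).

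The step I expect to be the main obstacle is the lower-bound bookkeeping: correctly listing the $12$ admissible deterministic boxes and organizing the five marginal equalities so that the counting argument closes without gaps. A brute-force alternative is to certify infeasibility of each of the $\binom{12}{5}$ five-point convex-membership linear programs by a separating Farkas functional; but the structured argument is cleaner and self-contained, and everything else — the quantum evaluation and the six-term check — is routine.
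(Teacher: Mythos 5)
Your proposal is correct, and it is in fact more complete than what the paper itself provides: the paper's treatment of this observation consists of exhibiting the six-term deterministic decomposition (the upper bound $d_\lambda\le 6$), asserting that among the possible decompositions this one minimizes $d_\lambda$, and implicitly relying on the quantum realization of the box (\ref{cc2qhd}) from the state (\ref{ccs}) with the observables (\ref{Obscc}); no explicit lower-bound argument is given. Your route supplies exactly that missing piece: the zeros of the contexts $A_0B_1D$, $A_1B_0E$ and $p(11|DE)=0$ correctly restrict any decomposition to the $12$ deterministic boxes with $d=a_0\oplus b_1$, $e=a_1\oplus b_0$ and $(a_0\oplus b_1,a_1\oplus b_0)\ne(1,1)$, the four pair marginals are indeed all $(\tfrac38,\tfrac18,\tfrac18,\tfrac38)$ (and the $DE$ row then follows automatically on this support), every off-diagonal weight is bounded by $\tfrac18$, and the two ``diagonal'' boxes are forced with weight at least $\tfrac18$ each. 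The only place you leave a sketch is the final case analysis ruling out five-point supports; this can be closed even more cleanly than by cases: each of the eight weight-$\tfrac18$ marginal cells must contain a support point, none of them contains a diagonal box, and each admissible off-diagonal box lies in exactly two of these cells, so at least four off-diagonal boxes are needed, giving $d_\lambda\ge 2+4=6$ outright. With that, your three pieces (quantum realization on $\mathbb{C}^2\otimes\mathbb{C}^2$, $d_\lambda=6$ exactly, and $6<7$ for the comparison with the box (\ref{sncbd})) establish the observation; the only caveat, shared with the paper, is that minimality is being assessed over decompositions into noncontextual \emph{deterministic} boxes, which here is without loss of generality because the zero pattern forces any per-$\lambda$ noncontextual response to be deterministic on all six observables.
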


\bibliographystyle{quantum}
\bibliography{context}

\end{document}